\providecommand{\tabularnewline}{\\}
\numberwithin{equation}{section}
\numberwithin{figure}{section}
\theoremstyle{plain}
\newtheorem{thm}{\protect\theoremname}
\theoremstyle{definition}
\newtheorem{defn}[thm]{\protect\definitionname}
\theoremstyle{definition}
\newtheorem{example}[thm]{\protect\examplename}
\newenvironment{lyxlist}[1]
	{\begin{list}{}
		{\settowidth{\labelwidth}{#1}
		 \setlength{\leftmargin}{\labelwidth}
		 \addtolength{\leftmargin}{\labelsep}
		 }}
	{\end{list}}
\definecolor{lightgray}{rgb}{.9,.9,.9}
\definecolor{darkgray}{rgb}{.4,.4,.4}
\definecolor{purple}{rgb}{0.65, 0.12, 0.82}
\newcommand{\dd}[1]{\llbracket#1\rrbracket}
\newcommand{\ds}[1]{\llbracket\left|#1\right|_s\rrbracket}
\newcommand{\de}[1]{\llbracket\left|#1\right|_e\rrbracket}
\newcommand{\bind}{\Yright}
\newcommand\sdtl{\textsc{SDTL}\xspace}
\providecommand{\definitionname}{Definition}
\providecommand{\examplename}{Example}
\providecommand{\theoremname}{Theorem}
\begin{document}

\title{Parametric Denotational Semantics for Extensible Language Definition
and Program Analysis}

\author{In-Ho Yi\\
Department of Computing and Information Systems \\
 The University of Melbourne, Victoria 3010, Australia \\
 i.yi@student.unimelb.edu.au}

\date{25th October 2012}

\maketitle
\global\long\def\st#1#2{\left|#1\right|_{s_{#2}}}

\global\long\def\ex#1#2{\left|#1\right|_{e_{#2}}}

\begin{abstract}
We present a novel approach to construction of a formal semantics
for a programming language. Our approach, using a parametric denotational
semantics, allows the semantics to be easily extended to support new
language features, and abstracted to define program analyses. We apply
this in analysing a duck-typed, reflective, curried dynamic language.
The benefits of this approach include its terseness and modularity,
and the ease with which one can gradually build language features
and analyses on top of a previous incarnation of a semantics.
\end{abstract}
\thispagestyle{empty}

\pagebreak{}

\setcounter{page}{1}
\pagenumbering{roman}

\tableofcontents{}

\pagebreak{}

\setcounter{page}{1}
\pagenumbering{arabic}

\section{Introduction}

Programming language semantics is a sub-field within theoretical computer
science where researchers develop formal descriptions for the meaning
of computer programs. Over the years, we have seen the development
of denotational semantics, where we mathematically model the effect
of an execution of a language construct. Operational semantics formalise
mechanical steps that transform program states given a particular
program. As we shall argue in this thesis, the challenge of analysing
dynamic languages, in both concrete and abstract manner, necessitates
a semantics that bridges the gap between the two different semantics
in order for such task to be feasible.

Abstract interpretation is a unifying theory for program analysis
and verification with which we ascertain run-time properties of a
program by approximating its semantics. The properties of interest
are almost always undecidable. The task of abstraction interpretation
can be thought of as over-approximating a set of concrete states in
a finite number of steps. The usual semantic domain is replaced by
an \emph{abstract} domain whose elements describe a set of run-time
states. Mathematically, such an abstract domain is a partially ordered
set (forming a lattice), the ordering corresponding to subset ordering
of the powerset of concrete states.

Two distinct needs motivated the development of the present work.
First, there is the theoretician's need for a simple, concise, elegant
way of presenting a formal semantics for a programming language and
of developing that into various static analyses. We base our approach
on a parametric denotational semantics that is modularised to allow
the concrete semantics and the abstract interpretation to share a
common framework that uniformly handles most aspects of the programming
language. Use of denotational semantics provides a strong foundation
in proving correctness of an abstract interpretation, and allows us
to focus on algorithmic details of analysis.

Second, there is a practical need for program analyses suitable for
the dynamic languages that have been growing in popularity in recent
years. Traditionally these languages were called ``scripting'' languages,
as they were mainly used for automating tasks and processing strings.
However, with the advent of web applications, languages such as Perl
and PHP gained popularity as languages for web application development.
On the client side, web pages make heavy use of JavaScript, a dynamically
typed language, to deliver dynamic contents to the browser. Recent
years have seen an increasing use of JavaScript on the server side,
as well.

What these languages provide is an ability to rapidly prototype and
validate application models in a real time read-eval-print loop. Another
strength comes from the fact that programmers do not need to have
a class structure defined upfront. Rather, class structures and types
of variables in general are dynamically built. This reduces the initial
overhead of software design.

However, these features come at a cost. The lack of a formal, static
definition of type information makes dynamically typed languages harder
to analyse. This difficulty causes several practical problems. 
\begin{itemize}
\item As applications become more mature, more effort is devoted to program
unit testing and writing assertions to ensure type safety of systems.
This extra effort can sometimes outweigh the benefit of having a dynamically
typed language.
\item Whereas programmers using statically typed languages enjoy an abundance
of development tools, the choice of tools for development in dynamically
typed languages is limited, and the tools that do exist lack much
of the power of the tools for statically typed languages, owing largely
to the difficulty or infeasibility of type analysis for such languages.
\item Lack of static type structure has a significant impact on the performance
of dynamically typed languages.
\end{itemize}
With these problems in mind, we have designed a model language that
has a dynamism comparable to that of the aforementioned scripting
languages, such as duck typing, reflection, and partial function application.
A notable omission is closure scoping. However, allowing function
currying gives expressive power to the language comparable to that
of languages with closure or lexical scoping.

The two concerns are not distinct ones, but an interconnected dialectic.
The theoretical need is there because of the difficulty of describing
the abstract and concrete meaning of dynamic languages, which often
allow side-effect causing, type-altering functions. With such complexity,
duck-typed languages are interesting test cases for which we formulate
concrete semantics, abstract interpretation and the proof of correctness.
Our Haskell implementation of both concrete and abstract analysis,
appearing in the appendix to this thesis, illustrates the practicality
of the proposed programming language semantics.

This work is inspired by Haskell's use of monads, and we assume the
reader's familiarity with monadic style Haskell programming. We also
assume knowledge of lambda notation, denotational semantics, order
and fixed point theory at the level of the textbook of Nielson and
Nielson's \cite{Nielson:1992:SAF:129085}.

In the following section, we discuss other works in the field of language
semantics and differentiate our work from them. In section \ref{sec:Overview},
we give a general overview of the proposed language semantics framework
and analysis. In sections \ref{sec:Analytic-framework} and \ref{sec:Semantic-functions},
we formally introduce our framework. In section \ref{sec:The-language-under},
we develop a model language with features gradually added on. We also
present concrete and abstract analysis of the language in each stage
of development in parallel. In sections \ref{sec:Well-definedness}
and \ref{sec:Correctness}, we argue formal properties of the language
analysis. Finally, in section \ref{sec:Conclusion-and-future}, we
conclude this thesis and discuss future direction.

\section{Related work}

Denotational semantics is the starting point of our development of
a formal framework. The idea of incorporating monads into denotational
definitions was developed by Liang and Hudak \cite{Liang:phd1998,Liang_Hudak:ESOP1996}.
Whereas these works modularise an analytic framework by having multiple
layers of monadic transformations, we instead parametrise the definition
of a program state. 

Action semantics, as advanced by Mosses \cite{Mosses1996TPA}, shares
the motivation that semantics ought to be pragmatic, yet expressive
enough to deal with non-trivial, feature-rich languages. While action
semantics endeavours to devise a new meta-language for describing
semantics, we constrain ourself to the language of denotational semantics,
and seek to devise a formalism largely compatible with denotational
semantics.

The idea of constructing formulae with parametric types can be found
in Wadler's work \cite{Wadler:1989:TF:99370.99404}. The present work
is a special application of the parametricity in the field of language
semantics and analysis.

Regarding the type analysis of dynamic languages, there have been
numerous studies \cite{Anderson:ECOOP2005,Anderson:ENTCS2005,Guha:ESOP2011,Wrigstad:POPL2010}
that consider simple toy languages and their semantics for the purpose
of static analysis of dynamic languages. A major difference between
those languages and the model language presented in this paper is
that our language is designed to capture the critical feature of real
world languages which allows functions to alter types through side-effect
causing statements. We point out similarities and differences of this
work compared to the cited works as we encounter them in this thesis.

Type analysis plays a crucial part in compiling scripting languages,
mainly to improve performance. Ancona et al \cite{Ancona:2007:RST:1297081.1297091}
and Dufour \cite{Dufour:2006} design restricted versions of scripting
languages so that static inference of types can be performed. We adopt
several techniques employed in those projects, such as the use of
named memory allocation sites as static references.

An important use case of functions in dynamically typed languages
is ``mixin'' functions \cite{Bracha:1990:MI:97946.97982}. By passing
arguments to a mixin function, objects can be extended with extra
methods; that is, functionality can be added dynamically. There are
model languages and formalisations of mixin functions, such as the
works of Anderson et al \cite{Anderson:ECOOP2005} and Mens et al
\cite{Mens:TR1996}. Where those works seek to find functional models
for mixins, we define instead a language (with side-effect causing
functions) that is expressive enough to program mixin inheritance.

Jensen et al \cite{Jensen:SAS2009} describe a feature-complete analyser
for the JavaScript language. Our work can be extended further to provide
the semantic foundation for such an analyser. Such an attempt to formalise
the analysis might pave the way for further refinement and improvement.

\section{\label{sec:Overview}Overview}

Our semantic framework is comprised of two components: one for the
syntactic structure, and the other for giving meanings to the primitive
operations. What divides the two is the following separation of concerns:
\begin{enumerate}
\item What are the semantic operations entailed in a particular syntactic
structure? For example, syntactic structure $\dd{a=30}$ entails a
primitive operation $\mathtt{asg}\left(a,30\right)$.
\item How do we interpret such semantic operations in a particular point
of view? If we were to give a concrete interpretation, we would interpret
$\mathtt{asg}\left(a,30\right)$ as updating an environment with a
newly defined variable e.g., $Env\left[a=30\right]$.
\end{enumerate}
Observe that an interpretation of syntactic structure can remain agnostic
of the structure of a program state at a given point. Therefore, once
we remove the actual interpretation of primitive operations, what
remains in a semantics can be re-used for multiple interpretations
of the language. Hence, not only are the primitive operations parametrised,
but so is the whole definition of the domain of the program state.
Such a separation of concerns also helps to define an extensible semantics,
to which adding a new feature takes as little effort as possible. 

Now we give a formal definition of our framework.
\begin{defn}[Parametric semantics]
 A parametric semantics $Q$ is a quintuple $\left(X,State,Value,s,P\right)$
where $X$ is a collection of semantic functions for syntactic structures,
as outlined below; $State$ is a set of representations of computation
state, which can be anything to suit a particular analysis; $Value$
is a set of all possible values that an expression can be evaluated
to be; $s$ is an initial program state; and $P$ is the set of primitive
operations of the semantics. We assume throughout that different states
are incomparable. In other words, $State$ is ordered by identity.
\end{defn}

Throughout the analysis, these primitive operations are the parameters
of our analysis:
\begin{itemize}
\item $\mathtt{esc}$ takes a state and reports whether it is escaping (i.e.,
whether or not control flow reaches the successor statement)
\item $\mathtt{cond}$ interprets the meaning of a branching point when
a value and two transformations (one for true and another for false)
are given
\item $\mathtt{asg}$ takes an identifier and a value, and performs assignment
\item $\mathtt{val}$ takes an identifier and produces its meaning
\item $\mathtt{conval}$ takes a constant and produces its meaning
\item $\mathtt{getinput}$ and $\mathtt{dooutput}$ define the meanings
of console I/O operations 
\item $\mathtt{bin}$ defines the meaning of all binary operations given
two values
\item $\mathtt{ret}$ defines the meaning of a return statement given the
value to be returned
\item $\mathtt{fundecl}$ defines the meaning of dynamic execution of a
function declaration
\item $\mathtt{apply}$ defines the meaning of (possibly partially) applying
a function to a list of values
\item $\mathtt{get}$ and $\mathtt{set}$ define the meaning of getting
or setting a member of an object
\item $\mathtt{getglobal}$ and $\mathtt{getthis}$ define the meanings
of keywords $\mathtt{global}$ and $\mathtt{this}$, respectively
\item $\mathtt{newobj}$ defines the meaning of instantiating a new object
from a particular allocation site
\end{itemize}
Types of these operations are given in section \ref{sec:The-language-under}
as we introduce them.

Our model language, as we let it evolve through this thesis, has a
set of features found commonly in scripting languages. In the remainder
of the thesis we provide the semantics for a language with many different
features. We introduce the components of the language step by step.
The aim is to demonstrate that the semantic formalism enables such
a stepwise development, each step being incremental in the sense that
it does not require revision of the semantic equations developed in
earlier steps. 

\begin{figure}[!t]
\begin{lstlisting}[frame=single]
function fact(f,x) {
	if(x < 2) { return 1; }
	return x * f(f,x-1);
}
output fact(fact,input);

fa=fact(fact);
output fa(input);

function Fruit(v) {
	this.value = v;
}

global.answer = 0;

function juicible(fruit, juice) {
	function juiceMe(j,x) {
		return this.value + j + x;
	}
	fruit.juice = juiceMe(juice); #currying
	global.answer=42;
}

# Juicibles
apple = new Fruit(15);
juicible(apple, 20);
grape = new Fruit(30);
juicible(grape, 50);

# Non-juicibles
banana = new Fruit(20);
watermelon = new Fruit(25);

output apple.juice(10); # 15 + 20 + 10
output grape.juice(10); # 30 + 50 + 10
output global.answer; # 42

try {
	if(input > 42) {throw 42;}
} catch(e) {
	output e;
}
\end{lstlisting}

\caption{\label{fig:Example-sdtl}Example \sdtl program}

\end{figure}

Figure \ref{fig:Example-sdtl} is an example of a program written
in the model language. We call this model language Simple Duck-Typed
Language(\sdtl). A locally-scoped procedural language with support
for higher order functions (lines 1 to 5) is introduced in Section
\ref{subsec:Procedural-language}. Function currying (lines 7, 8 and
20) is introduced in Section \ref{subsec:Function-currying}. Object
oriented features, including duck-typing and reflection, are introduced
in Section \ref{subsec:Object-Oriented}. Finally, exception handling
(lines 38 to 42) is introduced in Section \ref{subsec:Exception-handling}.

\section{\label{sec:Analytic-framework}Analytic framework}

In this section we introduce a monadic construct specifically designed
for the purpose of program analysis. We then introduce polymorphic
auxiliary functions that are useful in extending theories in a modular
manner.

First we define the monadic constructions. We define a type constructor
$M$ and a bind operator $\bind$. 
\begin{defn}[Type constructor]
 The type constructor $M$ has the following polymorphic definition.
$F$ is the set of program semantics. It is necessary to have this
as an input to the state transformation in order to give the fixed
point characterisation of semantics. $F$ is given a formal definition
in section \ref{sec:Semantic-functions}. The $a$ parameter to the
type is used in different context to extract different information
from the semantics.

\[
M\ a=F\rightarrow State\rightarrow\wp\left(State\times a\cup\left\{ \mathsf{Null}\right\} \right)
\]

Observe that a single state can give rise to multiple corresponding
successor states. We are essentially modelling a non-deterministic
state transformation. This gives us the flexibility to handle both
concrete and abstract semantics within a single framework.
\end{defn}

Every statement is understood as a state transformer. We distinguish
between ``normal'' and ``escaping'' statements, the latter yielding
an ``escape'' state. For example, when a function returns, the return
statement transforms the current state into an escape state. Our ``bind''
operator relies on a parametric operation $\mathtt{esc}$ to spell
out the precise mechanism for escaping the current program execution
flow. The $\mathtt{esc}$ function returns true if a state does not
continue to the next expression or statement (having encountered a
return statement, for example). This provides a flexible and general
formalisation of a control flow, and it allows the handling of exceptions
as well as function return statements.

In such escaping cases, there is no appropriate value of the type
$a$ to be associated with the successor states. Hence, we introduce
$\bot$ to be assigned to successor states of the escaping states.
\begin{defn}[Bind operator]
We define a bind operator $\bind$.

\begin{eqnarray*}
\left(\bind\right) & : & M\ a\rightarrow\left(a\rightarrow M\ b\right)\rightarrow M\ b\\
T\ \bind\ U & = & \lambda f\lambda s.\:\mathbf{let}\ S=T\ f\ s\ \mathbf{in}\underset{\left\langle s',a\right\rangle \in S}{\bigcup}\left(\begin{gathered}\mbox{if}\ \mathtt{esc}\left(s'\right)\ \mbox{then}\ \left\{ \left\langle s',\mathsf{Null}\right\rangle \right\} \ \mbox{else}\ U\ a\ f\ s'\end{gathered}
\right)
\end{eqnarray*}

\begin{defn}[Point-wise ordering of state transformations]
 Given $T,U\in State\rightarrow\wp\left(State\times a\cup\left\{ \mathsf{Null}\right\} \right)$,\\
$T\sqsubseteq U$ iff $\forall x\in State,\left(T\ x\right)\subseteq\left(U\ x\right)$

\begin{defn}[Point-wise ordering of monadic functions]
 Given $T,U\in M\ a$, $T\sqsubseteq U$ iff $\forall f\in F,\left(T\ f\right)\sqsubseteq\left(U\ f\right)$
\end{defn}

\end{defn}

\end{defn}

\begin{thm}[Preservation of monotonicity]
Given monads $T$, $T'$ and $U$, $T\sqsubseteq T'\implies T\bind U\sqsubseteq T'\bind U$ 
\end{thm}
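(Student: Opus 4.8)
The plan is to unfold the definition of the bind operator and compare the two resulting sets pointwise. Since the ordering on $M\ a$ is defined so that $T \sqsubseteq T'$ means $\forall f \in F$, $(T\ f) \sqsubseteq (T'\ f)$, which in turn means $\forall s \in State$, $(T\ f\ s) \subseteq (T'\ f\ s)$, the goal $T \bind U \sqsubseteq T' \bind U$ reduces to showing that for every fixed $f$ and $s$, the set computed by $T \bind U$ is a subset of the set computed by $T' \bind U$. So I would fix arbitrary $f \in F$ and $s \in State$ at the outset and work entirely inside $\wp\left(State \times a \cup \left\{\mathsf{Null}\right\}\right)$.

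The key observation is that the hypothesis $T \sqsubseteq T'$ gives us $S := T\ f\ s \subseteq T'\ f\ s =: S'$. Writing out the bind, the left-hand side is the union, over pairs $\left\langle s', a\right\rangle \in S$, of the set $\mathtt{esc}(s') \,?\, \left\{\left\langle s', \mathsf{Null}\right\rangle\right\} : U\ a\ f\ s'$, and similarly the right-hand side is the same union but ranging over $\left\langle s', a\right\rangle \in S'$. First I would note that the summand contributed by each pair $\left\langle s', a\right\rangle$ is a function only of that pair (and of the fixed $f$), so it is literally identical whether the pair is drawn from $S$ or from $S'$. The argument is then just the elementary monotonicity of an indexed union: enlarging the index set can only enlarge the union. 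Formally, any element of the left-hand union lies in some summand indexed by a pair in $S$; since $S \subseteq S'$, that same pair lies in $S'$, so the identical summand appears in the right-hand union, and hence the element belongs to the right-hand side.

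I do not anticipate a genuinely hard step here; the statement is essentially a congruence/monotonicity lemma and the proof is a short chain of set inclusions. The one place to be careful is the treatment of the two branches of the conditional: I want to confirm that whichever branch fires for a given $s'$, it depends only on $s'$ and the pair's value component and not on any global property of $S$ versus $S'$, so that the summands really do coincide across the two unions. Once that is checked, the escaping case $\mathtt{esc}(s')$ contributes $\left\{\left\langle s', \mathsf{Null}\right\rangle\right\}$ and the non-escaping case contributes $U\ a\ f\ s'$, both determined locally by the indexing pair, and the inclusion follows immediately. No appeal to continuity, fixed points, or properties of $U$ beyond its being a fixed well-typed monadic function is needed, which is reassuring given that the theorem should hold for completely arbitrary $U$.
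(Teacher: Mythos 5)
Your proof is correct and takes essentially the same approach as the paper's: fix $f\in F$ and $s\in State$, unfold the definition of $\bind$, and chase elements through the indexed union using the inclusion $T\ f\ s\subseteq T'\ f\ s$, since each summand depends only on the indexing pair. If anything, your version is slightly more careful than the paper's, which traces elements only through the $U$ branch and leaves the escaping case (contributing $\left\{ \left\langle s',\mathsf{Null}\right\rangle \right\}$) implicit.
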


\begin{proof}
When a state $x$ is a member of $T\bind U$ for some $f\in F$ and
an initial state $s$, there exists an intermediate state $x'\in\left(T\ f\ s\right)$
from which $x$ is derived by $U$. Clearly, such intermediate state
is also a member of $T'\ f\ s$ by definition of point-wise ordering. 

Formally, $\forall f\in F\forall s\in State,\exists x,x\in\left(\left(T\bind U\right)\ f\ s\right)\Leftrightarrow\exists x',x'\in\left(T\ f\ s\right)\wedge x\in\left(U\ f\ x'\right)$
by the definition of bind operation. Now, $T\sqsubseteq T'\implies x'\in\left(T'\ f\ s\right)$.
Hence, $x\in\left(\left(T'\bind U\right)\ f\ s\right)$.
\end{proof}
Having a monadic structure helps provide modularity. For example,
if a particular parametrised operation takes a state but only produces
a value, it would be redundant to include a state as a part of returning
type, to match the definition of monadic binding. In such a case,
we take a function that returns only a value, then lift it to be used
in the monadic context.
\begin{defn}[Monadic functions]
We define the following auxiliary functions to incorporate non-monadic
functions as a part of monadic transformation:

\begin{itemize}
\item (return for $M$) $I_{A}$ is an identity state transformer that takes
a constant and lifts it to an identity state transformer with the
constant as a return value

$I_{A}:a\rightarrow M\ a$

$I_{A}\ v=\lambda f\lambda s.\left\{ \left\langle s,v\right\rangle \right\} $
\item (lift for $M$) $I_{V}$ lifts a function that takes a state and returns
a value to a monadic function

$I_{V}:\left(State\rightarrow a\right)\rightarrow M\ a$

$I_{V}\ t=\lambda f\lambda s.\left\langle \left\{ \left\langle s,t\ s\right\rangle \right\} ,\emptyset\right\rangle $
\item $I_{S}$ takes a non-deterministic transformation and lifts it to
a monadic function

$I_{S}:\left(State\rightarrow\wp\left(State\right)\right)\rightarrow M\ a$

$I_{S}\ T=\lambda f\lambda s.\left\{ \left\langle s',\mathsf{Unit}\right\rangle \mid s'\in T\ s\right\} $

\end{itemize}
\begin{defn}[Record updater]
\label{recup} We model a state as a record with named fields. In
this way, an update operation written for a particular set of fields
can be reused without redefining it when we add extra dimensions to
a $State$ domain to accommodate features that are orthogonal to the
features of the previous version.

When we have a record $\rho$ with named fields $\left\langle c_{1},\ldots,c_{k}\right\rangle $,
and when an updater function $U$ updates fields $c_{i},\ldots,c_{j}$,
we define a function $\Upsilon$ that takes a record, projects its
fields into an n-tuple corresponding to the selected fields ($P_{c_{i},\ldots,c_{j}}$),
lets $U$ update the tuple, and finally updates the whole record with
the updated tuple ($J_{c_{i},\ldots,c_{j}}$).

\begin{eqnarray*}
P_{c_{i},\ldots,c_{j}} & = & \lambda\rho.\left\langle a_{i},\ldots,a_{j}\right\rangle \\
J_{c_{i},\ldots,c_{j}} & = & \lambda\left\langle a_{i},\ldots,a_{j}\right\rangle \lambda\rho.\rho\left[c_{n}=a_{n},n\in\left\{ i,\ldots,j\right\} \right]\\
\Upsilon_{c_{i},\ldots,c_{j}}\ U & = & J_{c_{i},\ldots,c_{j}}\circ U\circ P_{c_{i},\ldots,c_{j}}
\end{eqnarray*}
 where $a_{i}$ is a value for the field $c_{i}$ of a record $\rho$ 

Similarly, we define an operation to update a record and return a
value.

\[
\Psi_{c_{i},\ldots,c_{j}}\ U=\lambda\rho.\left\langle \rho',a\right\rangle 
\]

where $\left\langle \upsilon,a\right\rangle =U\ \left(P_{c_{i},\ldots,c_{j}}\ \rho\right)$
and $\rho'=J_{c_{i},\ldots,c_{j}}\ \upsilon$

Finally, we define a value extractor, that takes a record and selects
a value from it.

\[
\Theta{}_{c_{i},\ldots,c_{j}}\ U=U\circ P_{c_{i},\ldots,c_{j}}
\]

When $c$ is an n-tuple space for chosen fields and $r$ is a domain
of a $State$ record, the functions defined here have the following
type signatures:

\begin{eqnarray*}
\Upsilon & : & \left(c\rightarrow c\right)\rightarrow\left(r\rightarrow r\right)\\
\Psi & : & \left(c\rightarrow c\times a\right)\rightarrow\left(r\rightarrow r\times a\right)\\
\Theta & : & \left(c\rightarrow a\right)\rightarrow\left(r\rightarrow a\right)
\end{eqnarray*}
\end{defn}

\end{defn}

\begin{example}[Record updater example]
 To see these functions in use, suppose we have a simple record structure
for personal contacts.

\begin{eqnarray*}
Name & = & String\\
Address & = & String\\
Age & = & \mathbb{N}\\
Contact & = & Name\times Address\times Age
\end{eqnarray*}
\end{example}

\begin{itemize}
\item $\mathtt{addAge}$ updates age field of a contact record.
\end{itemize}
\[
\mathtt{addAge}\left(n\right)=\Upsilon_{Age}\ \lambda a.\left(a+n\right)
\]

\begin{itemize}
\item $\mathtt{getAgeAndAdd}$ returns the previous age field value while
updating the age field.
\end{itemize}
\[
\mathtt{getAgeAndAll}\left(n\right)=\Psi_{Age}\ \lambda a.\left\langle a+n,a\right\rangle 
\]

\begin{itemize}
\item $\mathtt{getAge}$ extracts age information from a contact record.
\end{itemize}
\[
\mathtt{getAge}=\Theta_{Age}\ \lambda a.a
\]

\begin{defn}[Singleton lifting]
Another commonly occurring pattern is that functions often return
a singleton set. We define a function that takes a function returning
a value and lifts it to be a function that returns a singleton set.

\begin{eqnarray*}
\Gamma & : & \left(a\rightarrow b\right)\rightarrow\left(a\rightarrow\wp\left(b\right)\right)\\
\Gamma\ F\ x & = & \left\{ F\ x\right\} 
\end{eqnarray*}

For simplicity of notation, we compose this function with the other
functions from Definition \ref{recup}.

\begin{eqnarray*}
\bar{\Upsilon} & = & \Gamma\circ\Upsilon\\
\bar{\Psi} & = & \Gamma\circ\Psi\\
\bar{\Theta} & = & \Gamma\circ\Theta
\end{eqnarray*}
\end{defn}

We now have monadic constructs and auxiliary functions to describe
the semantic functions of the model language. We can now define the
semantic functions of the language.

\section{\label{sec:Semantic-functions}Semantic functions}

We use syntax nodes as references to various items constituting program
environment. To all statements and expressions in a program, we designate
unique identifiers in order to reference them. For that purpose, we
define the following syntactic nodes and unique identifier spaces.
\begin{lyxlist}{00.00.0000}
\item [{$Stm$}] is the set of statement nodes.
\item [{$Exp$}] is the set of expression nodes.
\item [{$Lexp$}] is the set of left-expression nodes.
\item [{$Sid$}] is the set of statement identifiers.
\item [{$Eid$}] is the set of expression identifiers.
\item [{$Id$}] is the set of alphanumeric identifiers.
\end{lyxlist}
Note that we use an sid of a function declaration statement as a reference
point for the function defined. The $\mathtt{param}$ and $\mathtt{arity}$
functions take such an sid and return a list of parameter names, and
the arity of the function, respectively.

Where we specifically refer to an identifier to a syntactic construct,
we write $\dd{\left|X\right|_{x}}$ to mean a statement or expression
$X$ with an id $x$. In cases where such identifiers are not directly
referenced, we omit them for simplicity.
\begin{defn}[Semantic functions]
The analytic framework contains the following semantic functions:

\begin{eqnarray*}
F & = & Sid\rightarrow Stm\times\left(State\rightarrow\wp\left(State\right)\right)\\
\mathcal{S} & : & Stm\rightarrow F\rightarrow State\rightarrow\wp\left(State\times\left\{ \mathsf{Unit},\mathsf{Null}\right\} \right)\\
\mathcal{E} & : & Exp\rightarrow F\rightarrow State\rightarrow\wp\left(State\times Value\cup\left\{ \mathsf{Null}\right\} \right)\\
\mathcal{L} & : & Lexp\rightarrow F\rightarrow State\rightarrow\wp\left(State\times Value\cup\left\{ \mathsf{Null}\right\} \right)
\end{eqnarray*}

$\mathcal{S},\mathcal{E}$ and $\mathcal{L}$ are semantic functions
for statements, expressions and left expressions, respectively. $F$
is a function space to model the collection of functions in a program.
Given the sid of a function declaration site, it gives a statement
node for function declaration and a state transformer. Note that in
this picture a function ``returns'' a value by giving a state transformation.
Incorporating such a concept as a return value in a $State$ itself
provides a greater flexibility in describing the effects of executing
a statement or an expression at a particular program point.
\end{defn}

We define following auxiliary functions to describe the use of the
references to syntax nodes.

\begin{eqnarray*}
\mathtt{stm} & : & F\rightarrow Sid\rightarrow Stm\\
\mathtt{param} & : & F\rightarrow Sid\rightarrow\left[Id\right]\\
\mathtt{arity} & : & F\rightarrow Sid\rightarrow\mathbb{N}\cup\left\{ 0\right\} \\
\mathtt{run} & : & F\rightarrow Sid\rightarrow State\rightarrow\wp\left(State\right)
\end{eqnarray*}

\section{\label{sec:The-language-under}The language under study}

We define the model language, the \sdtl (Simple Duck-Typed Language).

\subsection{\label{subsec:Procedural-language}The procedural core language}

We start off with a procedural language with C-like syntax. 

\begin{grammar}

<con> ::= <Num> | <Bool>

<Lexp> ::= ID

<Exp> ::= <con> | <Lexp> | `input'
\alt  <Lexp> `(' [<Exp> [,<Exp>]*]? `)'
\alt  <Exp> <binop> <Exp>
\alt  `(' <Exp> `)'

<binop> ::= `+' | `-' | `*' | `/' | `>' | `<' | `=='

<Stm> ::= nil | <Stm> `;' <Stm> | <Exp>
\alt `output' <Exp>
\alt <Lexp> `=' <Exp>
\alt `if' `(' <Exp> `)' `{' <Stm> `}'
\alt `if' `(' <Exp> `)' `{' <Stm> `}' `else' `{' <Stm> `}'
\alt `while' `(' <Exp> `)' `{' <Stm> `}'
\alt `function' Id `(' [Id [, Id]*]? `)' `{' <Stm> `}'
\alt `return' <Exp>

\end{grammar}

Here \emph{Num} and \emph{Bool} are the syntactic categories for integers
and boolean values.

\sdtl does not have a separate category for function and variable
declarations. Variables are declared \emph{ad hoc} whenever such variables
appear as a left expression to assignment statements. Function declarations
are statements themselves, which allow them to appear anywhere in
the program.

\sdtl supports higher-order functions, which allows functions to
be recursively referenced. For example, we can define a factorial
function in a recursive manner.
\begin{example}[\label{Recursively-defined-factorial}Recursively defined factorial
function]
 In \sdtl, the factorial function can be implemented in a recursive
way.

\begin{lstlisting}
function fact(f,n) {
	if(n>1) { return f(f,n-1) * n; } else { return 1; }
}

z=fact(fact,input);
output z;
\end{lstlisting}

In this example, the function $\mathtt{fact}$ takes two arguments.
The first is the function pointer to recursively invoke, and the second
is the usual argument to the function. This example illustrates that
recursive functions are possible even in the absence of lexical scoping
or other special scoping rules to allow a function body to refer to
the function itself.
\end{example}

Given the availability of higher-order functions, we formulate the
meaning of a function as a fixed point (see the definition of $F$).
The semantic functions for \sdtl are defined in figure \ref{semantic}.
We use auxiliary functions $\mathtt{evalParams}$ and $\mathtt{call}$
to describe a function call. (Note that we use $\epsilon$ for the
empty sequence, $\left[a\right]$ for the set of sequences of any
number of values of type $a$, and the notation $a\smallfrown b$
to denote concatenation of sequences $a$ and $b$.)

\begin{eqnarray*}
\mathtt{call} & : & Sid\rightarrow\left[Value\right]\rightarrow M\ Value\\
\mathtt{call}\left(n,p\right) & = & \lambda f\lambda\rho.\left\{ \mathtt{leave}\ \rho\ \rho'\mid\rho'\in S\right\} \ \mbox{where}\ S=\mathtt{run}\left(f,n\right)\ \left(\mathtt{enter}\ \rho\ n\ p\ \mathtt{param}\left(f,n\right)\right)
\end{eqnarray*}

\begin{eqnarray*}
\mathtt{evalParams} & : & \left[Exp\right]\times\left[Value\right]\rightarrow F\rightarrow State\rightarrow\wp\left(State\times\left[Value\right]\right)\\
\mathtt{evalParams}\ \epsilon\ ps\ f\ \rho & = & \lambda f\lambda\rho.\left\{ \left\langle \rho,ps\right\rangle \right\} \\
\mathtt{evalParams}\ \dd{E}\smallfrown Exps\ ps & = & \lambda f\lambda\rho.\underset{\left\langle \rho',con\right\rangle \in S}{\bigcup}\left\{ \mathtt{evalParams}\ Exps\ \left(ps\smallfrown con\right)\ f\ \rho'\right\} \\
 &  & \mbox{where}\;S\ =\ \mathcal{E}\ \dd{E}\ f\ \rho
\end{eqnarray*}

$\mathtt{enter}$ is a parametrised function that takes a caller's
state at the time of function invocation and an id-to-constant value
mapping, and constructs an initial state for a callee. $\mathtt{leave}$
takes both the caller's state and the resulting states of callee's,
and constructs the caller's states after the function call. These
functions are parametrised so as to allow each interpretation to define
the exact shape of a program state and its manipulation during a function
call and return. These functions have the following types:

\begin{eqnarray*}
\mathtt{enter} & : & State\rightarrow Sid\rightarrow\left[Value\right]\rightarrow\left[Id\right]\rightarrow State\\
\mathtt{leave} & : & State\rightarrow State\rightarrow State
\end{eqnarray*}

\begin{figure*}[!t]
\noindent\fbox{\begin{minipage}[t]{1\textwidth - 2\fboxsep - 2\fboxrule}%
\begin{align*}
 & \mathcal{F} & = & Y\left(\lambda f.\lambda n.\left\langle \mathtt{stm}\left(f,n\right),\mathcal{S}\ \dd{\mathtt{stm}\left(f,n\right)}\ f\right\rangle \right)\\
 & \mathcal{S}\dd{\epsilon} & = & I_{A}\bot\\
 & \mathcal{S}\dd{S_{1};S_{2}} & = & \begin{aligned}\mathcal{S}\ \dd{S_{1}}\bind\lambda\_.\mathcal{S}\ \dd{S_{2}}\end{aligned}
\\
 & \mathcal{S}\dd{E} & = & \mathcal{E}\ \dd{E}\bind\lambda\_.I_{A}\mathsf{Unit}\\
 & \mathcal{S}\dd{\mathtt{return}\ E} & = & \mathcal{E}\ \dd{E}\bind\lambda v.I_{S}\ \mathtt{ret}\left(V\right)\\
 & \mathcal{S}\dd{\mathtt{if}\left(E\right)\ S_{1}} & = & \mathcal{E}\ \dd{E}\bind\lambda v.\mathtt{cond}\left(v,\mathcal{S}\ \dd{S_{1}},I_{A}\mathsf{Unit}\right)\\
 & \mathcal{S}\dd{\mathtt{if}\left(E\right)\ S_{1}\ \mathtt{else}\ S_{2}} & = & \mathcal{E}\ \dd{E}\bind\lambda v.\mathtt{cond}\left(v,\mathcal{S}\ \dd{S_{1}},\mathcal{S}\ \dd{S_{2}}\right)\\
 & \mathcal{S}\dd{id=E} & = & \begin{aligned}\mathcal{E}\ \dd{E} & \bind\lambda v.I_{S}\ \mathtt{asg}\left(id,v\right)\end{aligned}
\\
 & \mathcal{S}\dd{\mathtt{while}\left(E\right)\ S_{1}} & = & Y\lambda x.\lambda f\lambda r.\left(\begin{aligned}\mathcal{E}\ \dd{E}\bind\lambda v.\mathtt{cond}\left(\begin{gathered}v,\left(\mathcal{S}\ \dd{S_{1}}\right)\bind\lambda\_.x,I_{A}\mathsf{Unit}\end{gathered}
\right)\end{aligned}
\right)\\
 & \mathcal{S}\dd{\mathtt{output}\ E} & = & \mathcal{E}\ \dd{E}\bind\lambda v.I_{S}\ \mathtt{dooutput}\left(v\right)\\
 & S\ds{\mathtt{function}\ Id_{1}\left(\vec{Id}\right)\ S_{1}} & = & I_{S}\ \mathtt{fundecl}\left(Id_{1},s\right)\\
 & \mathcal{E}\dd{con} & = & I_{A}\ \mathtt{conval}\left(con\right)\\
 & \mathcal{E}\dd{L} & = & \mathcal{L}\ \dd{L}\\
 & \mathcal{E}\dd{\mathtt{input}} & = & \mathtt{getinput}\\
 & \mathcal{E}\dd{L\left(\vec{E}\right)} & = & \mathcal{L}\ \dd L\bind\lambda n.\left(\mathtt{evalParams}\ \vec{E}\ \emptyset\right)\bind\lambda p.\mathtt{call}\left(n,p\right)\\
 & \mathcal{E}\dd{E_{1}\ binop\ E_{2}} & = & \mathcal{E}\ \dd{E_{1}}\bind\lambda c_{1}.\mathcal{E}\ \dd{E_{2}}\bind\lambda c_{2}.I_{V}\ \mathtt{bin}\left(binop,c_{1},c_{2}\right)\\
 & \mathcal{L}\dd{Id} & = & I_{A}\ \mathtt{val}\left(Id\right)
\end{align*}
\end{minipage}}

\label{semantic}\caption{Semantic equations for a procedural core of \sdtl}
\end{figure*}

The types of primitive operations are as follows.

\begin{eqnarray*}
\mathtt{ret} & : & Value\rightarrow State\rightarrow\wp\left(State\right)\\
\mathtt{cond} & : & Value\rightarrow M\ Value\rightarrow M\ Value\rightarrow M\ Value\\
\mathtt{asg} & : & Id\rightarrow Value\rightarrow State\rightarrow\wp\left(State\right)\\
\mathtt{dooutput} & : & Value\rightarrow State\rightarrow\wp\left(State\right)\\
\mathtt{fundecl} & : & Id\rightarrow Sid\rightarrow State\rightarrow\wp\left(State\right)\\
\mathtt{conval} & : & \mathbb{Z}\cup\left\{ \mathsf{true},\mathsf{false}\right\} \rightarrow Value\\
\mathtt{getinput} & : & M\ Value\\
\mathtt{bin} & : & \left\{ +,-,*,/\right\} \rightarrow\mathbb{Z}\cup\left\{ \mathsf{true},\mathsf{false}\right\} \rightarrow\mathbb{Z}\cup\left\{ \mathsf{true},\mathsf{false}\right\} \rightarrow Value\\
\mathtt{val} & : & Id\rightarrow State\rightarrow Value
\end{eqnarray*}

\subsubsection{Concrete interpretation}

\paragraph{Domain}

\begin{eqnarray*}
FunPointer & = & Sid\\
Value & = & \mathbb{Z}\cup\left\{ \mathsf{true},\mathsf{false}\right\} \cup\:FunPointer\\
Env & = & Id\rightarrow Value\\
Return & = & Value\cup\left\{ \mathsf{Void}\right\} \\
CState & = & Env\times IO\times Return
\end{eqnarray*}

$CState$ is a Cartesian product of environment, input/output state
and a return value. The return value is set to be a value when a function
is returning any value inside a function body. This has been incorporated
as a part of a program state so that we can signal escaping from a
program flow. The initial program state is $\left\langle \emptyset,io,\mathsf{Void}\right\rangle ,$
where $io$ is an initial IO state.

\paragraph{Functions}

\begin{eqnarray*}
\mathtt{enter} & = & \lambda\left\langle \_,IO,\_\right\rangle ,n,P,p.\left\langle \left[p_{k}\mapsto P_{k}\right],IO,\mathsf{Void}\right\rangle \\
\mathtt{leave} & = & \lambda\left\langle V,\_,\_\right\rangle ,\left\langle \_,IO',R\right\rangle .\left\langle \left\langle V,IO',\mathsf{Void}\right\rangle ,R\right\rangle 
\end{eqnarray*}

\begin{eqnarray*}
\mathtt{esc} & = & \bar{\Theta}_{Return}\ \left(\lambda R.\left(R\neq\mathsf{Void}\right)\right)\\
\mathtt{cond}\left(v,s_{1},s_{2}\right) & = & \begin{cases}
s_{1} & \mbox{if}\ v=\mathsf{true}\\
s_{2} & \mbox{if}\ v=\mathsf{\mathsf{false}}
\end{cases}\\
\mathtt{asg}\left(id,v\right) & = & \bar{\Upsilon}_{Env}\ \left(\lambda V.V\left[id\mapsto v\right]\right)\\
\mathtt{val}\left(id\right) & = & \bar{\Upsilon}_{Env}\ \left(\lambda V.V\left(id\right)\right)\\
\mathtt{conval}\left(con\right) & = & \left\{ con\right\} \\
\mathtt{getinput\left(\rho\right)} & = & \bar{\Psi}_{IO}\ \lambda O.\left\langle O',\mbox{user input integer}\right\rangle \ \mbox{see note}\\
\mathtt{dooutput}\left(v\right) & = & \bar{\Upsilon}_{IO}\ \lambda O.O'\ \mbox{see note}\\
\mathtt{fundecl}(id,n) & = & \bar{\Upsilon}_{Env}\ \left(\lambda V.V\left[id\mapsto n\right]\right)\\
\mathtt{ret}\left(v\right) & = & \bar{\Upsilon}_{Return}\ \left(\lambda R.v\right)\\
\mathtt{bin}\left(op,c_{1},c_{2}\right) & = & c_{1}\ op\ c_{2}\ \mbox{(perform binary operation between two constants)}
\end{eqnarray*}

We omit a detailed description of the IO environment. Normally, IO
can be modelled as a queue of inputs and outputs as they are given
and produced during the execution of a program.
\begin{example}[Concrete interpretation of recursive factorial function]
The program in example \ref{Recursively-defined-factorial} is concretely
interpreted as follows.

\begin{itemize}
\item At line 1, $\mathtt{fundecl}\left(\mathtt{fact},1\right)$ updates
environment to be $\left[\mathtt{fact}\mapsto\mathsf{Function}\ 1\right]$
assuming the function declaration has a unique id of 1.
\item At line 5, $\mathtt{getinput}$ gives a user input. Assume that the
input was 2. $\mathtt{evalParam}$ gives $\left[\left(\mathsf{Function}\ 1\right),2\right]$.
In a function call $\mathtt{call}\left(1,\left[\left(\mathsf{Function}\ 1\right),2\right]\right)$,
we first construct the initial state of a function call. $\mathtt{enter}$
gives $\left\langle \left[\mathtt{f}\mapsto\left(\mathsf{Function}\ 1\right),\mathtt{n}\mapsto2\right],IO,Void\right\rangle $.
\item At line 2, evaluating expression $\mathtt{n}>1$ yields true. $\mathtt{cond}$
invokes another function call, with initial state $\left\langle \left[\mathtt{f}\mapsto\left(\mathsf{Function}\ 1\right),\mathtt{n}\mapsto1\right],IO,Void\right\rangle $.
\item On the second call \texttt{fact(f,1)}, $\mathtt{n}>1$ yields false.
Hence, $\mathtt{cond}$ invokes $\mathtt{ret}\left(1\right)$ which
gives final state of $\left\langle \left[\mathtt{f}\mapsto\left(\mathsf{Function}\ 1\right),\mathtt{n}\mapsto1\right],IO,1\right\rangle $.
\item On the first call, this state is first evaluated to yield value $1$
by $\mathtt{leave}$. Then, \texttt{f(f,1) {*} 2} evaluates to 2,
which becomes the ultimate return value.
\item At line 5, after the function call $\mathtt{leave}$ gives $\left\langle \left[\mathtt{fact}\mapsto\mathsf{Function}\ 1\right],IO,\mathsf{Void}\right\rangle $
as the final state. $\mathtt{asg}$ adds symbol $\mathtt{z}$ to the
environment: $\mathtt{z}\mapsto2$
\item At line 6, $\mathtt{val\left(z\right)}$ evaluates to 2, which is
the final output of the program.
\end{itemize}
\end{example}

\subsubsection{Abstract interpretation}

At this stage, abstract interpretation looks largely similar to concrete
interpretation. Notable differences are that we approximate each constant
by its type, and that $\mathtt{cond}$ is a non-deterministic transformation
where it collects effects of both branches at a branching point.

\paragraph{Domain}

\begin{eqnarray*}
FunPointer & = & Sid\\
AVal & = & \left\{ \mathsf{Num},\mathsf{Bool}\right\} \cup FunPointer\\
AEnv & = & Id\rightarrow AVal\\
AReturn & = & AVal\cup\left\{ \mathsf{Void}\right\} \\
AState & = & AEnv\times AReturn
\end{eqnarray*}

Composition of an abstract domain is similar to that of the concrete
counterpart, except that it does not include an IO state. $\bot$
is an undetermined value, which is used to approximate unknown function
calls at an initial stage. The initial program state is $\left\langle \emptyset,\mathsf{Void}\right\rangle $.

\paragraph{Functions}

\begin{eqnarray*}
\mathtt{enter} & = & \lambda\left\langle \_,\_\right\rangle ,n,P,p.\left\langle \left[p_{k}\mapsto P_{k}\right],\mathsf{Void}\right\rangle \\
\mathtt{leave} & = & \lambda\left\langle V,\_\right\rangle ,\left\langle \_,R\right\rangle .\left\langle \left\langle V,\mathsf{Void}\right\rangle ,R\right\rangle 
\end{eqnarray*}

\begin{eqnarray*}
\mathtt{esc} & = & \bar{\Theta}_{AReturn}\ \left(\lambda R.\left(R\neq\mathsf{Void}\right)\right)\\
\mathtt{cond}\left(v,s_{1},s_{2}\right) & = & \lambda f\lambda\eta.\left(s_{1}\ f\ \eta\right)\cup\left(s_{2}\ f\ \eta\right)\\
\mathtt{asg}\left(id,v\right) & = & \bar{\Upsilon}_{AEnv}\ \left(\lambda\sigma.\sigma\left[id=v\right]\right)\\
\mathtt{val}\left(id,v\right) & = & \bar{\Upsilon}_{AEnv}\ \left(\lambda\sigma.\sigma\left(id\right)\right)\\
\mathtt{conval}\left(con\right) & = & \begin{cases}
\Gamma\ \mathsf{Num} & \mbox{if}\ con\in\mathbb{N}\\
\Gamma\ \mathsf{Bool} & \mbox{if}\ con\in\mathbb{B}
\end{cases}\\
\mathtt{getinput} & = & \Gamma\ \left(\lambda\eta.\left\langle \eta,\mathsf{Num}\right\rangle \right)\\
\mathtt{dooutput}\left(v\right) & = & \Gamma\ \left(\lambda\eta.\eta\right)\\
\mathtt{fundecl}\left(id,n\right) & = & \bar{\Upsilon}_{AEnv}\ \left(\lambda\sigma.\sigma\left[id\mapsto n\right]\right)\\
\mathtt{ret}\left(v\right) & = & \bar{\Upsilon}_{AReturn}\ \left(\lambda R:AReturn.v\right)\\
\mathtt{bin}\left(op,c_{1},c_{2}\right) & = & \begin{cases}
\Gamma\ \mathsf{Num} & \mbox{if}\ op\in{\scriptstyle \left\{ +,-,*,/\right\} }\\
\Gamma\ \mathsf{Bool} & \mbox{otherwise}
\end{cases}
\end{eqnarray*}

Function definitions are largely similar to that of concrete definition.
\begin{example}[\label{abs-int-fact}Abstract interpretation of recursive factorial
function]
The program in example \ref{Recursively-defined-factorial} is abstractly
interpreted as follows.

\begin{itemize}
\item At line 1, $\mathtt{fundecl}\left(\mathtt{fact},1\right)$ updates
environment to be $\left[\mathtt{fact}\mapsto\mathsf{Function}\ 1\right]$
assuming the function declaration has a unique id of 1.
\item At line 5, $\mathtt{getinput}$ gives an abstract value $\mathsf{\mathsf{Num}}$.
$\mathtt{evalParam}$ gives $\left[\left(\mathsf{Function}\ 1\right),\mathsf{\mathsf{Num}}\right]$.
In a function call $\mathtt{call}\left(1,\left[\left(\mathsf{Function}\ 1\right),\mathsf{\mathsf{Num}}\right]\right)$,
we first construct initial state of a function call. $\mathtt{enter}$
gives 
\[
\left\langle \left[\mathtt{f}\mapsto\left(\mathsf{Function}\ 1\right),\mathtt{n}\mapsto\mathsf{Num}\right],IO,\mathsf{Void}\right\rangle 
\]
\item The meaning of this function call is determined via a fixed point
iteration by progressively updating the current approximation of the
meaning of the function call, starting from a null hypothesis that
the function call does not return any state.
\end{itemize}
\begin{tabular}{|c|c|c|}
\hline 
Current approximation & Meaning of function call & Note\tabularnewline
\hline 
\hline 
$\emptyset$ & $\left\{ \left\langle \left[\mathtt{fact}\mapsto\left(\mathsf{Function}\ 1\right)\right],\mathsf{\mathsf{Num}}\right\rangle \right\} $ & \tabularnewline
\hline 
$\left\{ \left\langle \left[\mathtt{fact}\mapsto\left(\mathsf{Function}\ 1\right)\right],\mathsf{\mathsf{Num}}\right\rangle \right\} $ & $\left\{ \left\langle \left[\mathtt{fact}\mapsto\left(\mathsf{Function}\ 1\right)\right],\mathsf{\mathsf{Num}}\right\rangle \right\} $ & Fixed point\tabularnewline
\hline 
\end{tabular}

Implementation of this fixed point iteration is found in the $\mathtt{fun}$
function in appendix \ref{subsec:Abstract-interpretation}.

\begin{itemize}
\item This yields $\left\langle \left[\mathtt{fact}\mapsto\left(\mathsf{Function}\ 1\right)\right],\mathsf{\mathsf{Num}}\right\rangle $
\item After $\mathtt{leave}$,$\mathtt{asg}$ and $\mathtt{dooutput}$,
we have the final state of the program: $\left[\mathtt{fact}\mapsto\mathsf{Function}\ 1,\mathtt{z}\mapsto\mathsf{\mathsf{Num}}\right]$
\end{itemize}
\begin{example}[Abstract interpretation of a while loop]
 The following example illustrates an interpretation of a while loop
through a fixed point iteration.

The program calculates sum of a sequence. For the purpose of illustration,
we have added variable $\mathtt{x}$ that changes its type inside
a while loop.

\begin{lstlisting}
sum = 0;
z = input;
x = 50;
while(z>0) {
	sum = sum + z;
	z = z - 1;
	x = true;
}

output sum;
\end{lstlisting}
\end{example}

\begin{itemize}
\item At line 4, we have environment $\left[\mathtt{sum}\mapsto\mathsf{\mathsf{Num}},\mathtt{z}\mapsto\mathsf{\mathsf{Num}},\mathtt{x}\mapsto\mathsf{\mathsf{Num}}\right]$.
\item As an initial hypothesis, we assume that the statement body of a while
loop does not cause any change in the program state for any given
initial state. We progressively update this approximation until we
meet a fixed point.
\end{itemize}
\begin{tabular}{|c|c|}
\hline 
Current Approximation & Init $\rightarrow$ Final State\tabularnewline
\hline 
\hline 
$\emptyset$ & $\left[\begin{array}{ccc}
\mathtt{sum} & \mapsto & \mathsf{Num}\\
\mathtt{z} & \mapsto & \mathsf{Num}\\
\mathtt{x} & \mapsto & \mathsf{Num}
\end{array}\right]\rightarrow\left\{ \begin{gathered}\left[\begin{array}{ccc}
\mathtt{sum} & \mapsto & \mathsf{Num}\\
\mathtt{z} & \mapsto & \mathsf{Num}\\
\mathtt{x} & \mapsto & \mathsf{Num}
\end{array}\right]\\
\left[\begin{array}{ccc}
\mathtt{sum} & \mapsto & \mathsf{Num}\\
\mathtt{z} & \mapsto & \mathsf{Num}\\
\mathtt{x} & \mapsto & \mathsf{Bool}
\end{array}\right]
\end{gathered}
\right\} $\tabularnewline
\hline 
$\left[\begin{array}{ccc}
\mathtt{sum} & \mapsto & \mathsf{Num}\\
\mathtt{z} & \mapsto & \mathsf{Num}\\
\mathtt{x} & \mapsto & \mathsf{Num}
\end{array}\right]\rightarrow\left\{ \begin{gathered}\left[\begin{array}{ccc}
\mathtt{sum} & \mapsto & \mathsf{Num}\\
\mathtt{z} & \mapsto & \mathsf{Num}\\
\mathtt{x} & \mapsto & \mathsf{Num}
\end{array}\right]\\
\left[\begin{array}{ccc}
\mathtt{sum} & \mapsto & \mathsf{Num}\\
\mathtt{z} & \mapsto & \mathsf{Num}\\
\mathtt{x} & \mapsto & \mathsf{Bool}
\end{array}\right]
\end{gathered}
\right\} $ & $\begin{gathered}\left[\begin{array}{ccc}
\mathtt{sum} & \mapsto & \mathsf{Num}\\
\mathtt{z} & \mapsto & \mathsf{Num}\\
\mathtt{x} & \mapsto & \mathsf{Num}
\end{array}\right]\rightarrow\left\{ \begin{gathered}\left[\begin{array}{ccc}
\mathtt{sum} & \mapsto & \mathsf{Num}\\
\mathtt{z} & \mapsto & \mathsf{Num}\\
\mathtt{x} & \mapsto & \mathsf{Num}
\end{array}\right]\\
\left[\begin{array}{ccc}
\mathtt{sum} & \mapsto & \mathsf{Num}\\
\mathtt{z} & \mapsto & \mathsf{Num}\\
\mathtt{x} & \mapsto & \mathsf{Bool}
\end{array}\right]
\end{gathered}
\right\} \\
\left[\begin{array}{ccc}
\mathtt{sum} & \mapsto & \mathsf{Num}\\
\mathtt{z} & \mapsto & \mathsf{Num}\\
\mathtt{x} & \mapsto & \mathsf{Bool}
\end{array}\right]\rightarrow\left[\begin{array}{ccc}
\mathtt{sum} & \mapsto & \mathsf{Num}\\
\mathtt{z} & \mapsto & \mathsf{Num}\\
\mathtt{x} & \mapsto & \mathsf{Bool}
\end{array}\right]
\end{gathered}
$\tabularnewline
\hline 
$\begin{gathered}\left[\begin{array}{ccc}
\mathtt{sum} & \mapsto & \mathsf{Num}\\
\mathtt{z} & \mapsto & \mathsf{Num}\\
\mathtt{x} & \mapsto & \mathsf{Num}
\end{array}\right]\rightarrow\left\{ \begin{gathered}\left[\begin{array}{ccc}
\mathtt{sum} & \mapsto & \mathsf{Num}\\
\mathtt{z} & \mapsto & \mathsf{Num}\\
\mathtt{x} & \mapsto & \mathsf{Num}
\end{array}\right]\\
\left[\begin{array}{ccc}
\mathtt{sum} & \mapsto & \mathsf{Num}\\
\mathtt{z} & \mapsto & \mathsf{Num}\\
\mathtt{x} & \mapsto & \mathsf{Bool}
\end{array}\right]
\end{gathered}
\right\} \\
\left[\begin{array}{ccc}
\mathtt{sum} & \mapsto & \mathsf{Num}\\
\mathtt{z} & \mapsto & \mathsf{Num}\\
\mathtt{x} & \mapsto & \mathsf{Bool}
\end{array}\right]\rightarrow\left[\begin{array}{ccc}
\mathtt{sum} & \mapsto & \mathsf{Num}\\
\mathtt{z} & \mapsto & \mathsf{Num}\\
\mathtt{x} & \mapsto & \mathsf{Bool}
\end{array}\right]
\end{gathered}
$ & $\begin{gathered}\left[\begin{array}{ccc}
\mathtt{sum} & \mapsto & \mathsf{Num}\\
\mathtt{z} & \mapsto & \mathsf{Num}\\
\mathtt{x} & \mapsto & \mathsf{Num}
\end{array}\right]\rightarrow\left\{ \begin{gathered}\left[\begin{array}{ccc}
\mathtt{sum} & \mapsto & \mathsf{Num}\\
\mathtt{z} & \mapsto & \mathsf{Num}\\
\mathtt{x} & \mapsto & \mathsf{Num}
\end{array}\right]\\
\left[\begin{array}{ccc}
\mathtt{sum} & \mapsto & \mathsf{Num}\\
\mathtt{z} & \mapsto & \mathsf{Num}\\
\mathtt{x} & \mapsto & \mathsf{Bool}
\end{array}\right]
\end{gathered}
\right\} \\
\left[\begin{array}{ccc}
\mathtt{sum} & \mapsto & \mathsf{Num}\\
\mathtt{z} & \mapsto & \mathsf{Num}\\
\mathtt{x} & \mapsto & \mathsf{Bool}
\end{array}\right]\rightarrow\left[\begin{array}{ccc}
\mathtt{sum} & \mapsto & \mathsf{Num}\\
\mathtt{z} & \mapsto & \mathsf{Num}\\
\mathtt{x} & \mapsto & \mathsf{Bool}
\end{array}\right]
\end{gathered}
$\tabularnewline
\hline 
\end{tabular}

Implementation of this fixed point iteration can be found in the$\mathtt{fix}$
function in appendix \ref{subsec:Abstract-interpretation}.

\begin{itemize}
\item The resulting final states of the program is calculated to be $\left[\begin{array}{ccc}
\mathtt{sum} & \mapsto & \mathsf{Num}\\
\mathtt{z} & \mapsto & \mathsf{Num}\\
\mathtt{x} & \mapsto & \mathsf{Num}
\end{array}\right],\left[\begin{array}{ccc}
\mathtt{sum} & \mapsto & \mathsf{Num}\\
\mathtt{z} & \mapsto & \mathsf{Num}\\
\mathtt{x} & \mapsto & \mathsf{Bool}
\end{array}\right]$
\end{itemize}
\end{example}

\subsection{\label{subsec:Function-currying}Function currying}

We now introduce function currying to the \sdtl language. Introduction
of this language feature allows the language to be flexible enough
to express what JavaScript programmers would do with lexical scoping.
\begin{example}[\label{Function-currying}Function currying]
 We take a simple add function, and curry one argument to produce
different adders.

\begin{lstlisting}
function add(x,y) {
	return x+y;
}

add5 = add(5);
add7 = add(7);

output add5(input) + add7(input);
\end{lstlisting}

If we were to write this in JavaScript, we could have written the
following for the same effect.

\begin{lstlisting}
function adder(toadd) {
	return function(y) {
		return toadd+y;
	}
}

add5 = adder(5);
add7 = adder(7);
// suppose input is a platform-specific console input function
console.log(add5(input()) + add7(input()));
\end{lstlisting}
\end{example}

The introduction of function currying does not change the syntax of
the language. Therefore, there is no inherent reason for changing
semantic functions. However, we do redefine the semantics of function
calls to include an eid as an input, for the reason explained below.

\[
\mathcal{E}\de{L\left(\vec{E}\right)}=\mathcal{L}\ \dd{L}\bind\lambda n.\mathtt{evalParams}\ \vec{E}\ \emptyset\bind\lambda p.\mathtt{apply}\left(n,p,e\right)
\]

Here we introduce the $\mathtt{apply}$ function. It invokes the $\mathtt{call}$
function when the function arguments are saturated, or it returns
a pointer to a curried function otherwise. Its type is as follows.

\begin{eqnarray*}
\mathtt{apply} & : & Value\rightarrow\left[Value\right]\rightarrow Eid\rightarrow M\ Value
\end{eqnarray*}

\subsubsection{Concrete interpretation}

We need to extend the definition of $FunPointer$ to hold curried
parameters. This means that the $\mathtt{fundecl}$ function also
needs to be modified to match the new type signature.

\paragraph{Domain}

$FunPointer=Sid\times\left[Value\right]$

The initial program state is unchanged.

\paragraph{Functions}

\begin{eqnarray*}
\mathtt{apply}\left(\left\langle n,C\right\rangle ,p,\_\right) & =\lambda f. & \begin{cases}
\mathtt{call}\left(n,C\smallfrown p\right)\ f & \mbox{if}\begin{gathered}\:\left\Vert C\smallfrown p\right\Vert =\mathtt{arity}\left(f,n\right)\end{gathered}
\\
\lambda\rho.\left\{ \left\langle \rho,\left\langle n,C\smallfrown p\right\rangle \right\rangle \right\} \  & \mbox{otherwise}
\end{cases}\\
\mathtt{fundecl}(id,n) & = & \bar{\Upsilon}_{Env}\ \lambda V.V\left[id\mapsto\left\langle n,\emptyset\right\rangle \right]
\end{eqnarray*}

Here, $\left\Vert s\right\Vert $ is the length of a sequence $s$.
\begin{example}[Concrete interpretation of a function currying]
 Consider the program shown in example \ref{Function-currying}.
\end{example}

\begin{itemize}
\item At line 1, we are presented with a function declaration. $\mathtt{fundecl}$
adds the identifier as a reference to a function pointer with no curried
value. Assuming $add$ has given a unique id of $1$ during the parsing
of the program, we have in environment $\left[\mathtt{add}\mapsto\left\langle 1,\emptyset\right\rangle \right]$.
\item At lines 5 and 7, we partially apply the $\mathtt{add}$ function.
Since the number of arguments is not saturated, $\mathtt{apply}\left(\left\langle 1,\emptyset\right\rangle ,\left[5\right],\_\right)$
gives $\left\langle 1,\left[5\right]\right\rangle $ for $\mathtt{add5}$.
Similarly, $\mathtt{add7}$ gets $\left\langle 1,\left[7\right]\right\rangle $.
\item At line 8, we saturate the parameters, $\mathtt{apply}\left(\left\langle 1,\left[5\right]\right\rangle ,\left[x\right],\_\right)$
give $\mathtt{call}\left(1,\left[5,x\right]\right)$, where $x$ is
an arbitrary number given from user input. Hence we have the addition
done. It works similarly for $\mathtt{add7}$.
\end{itemize}

\subsubsection{Abstract interpretation}

Note that curried functions introduce a possibility of creating closures
requiring an infinite number of arguments.
\begin{example}[\label{Currying-loop}Currying loop]
 Consider the following program. 
\end{example}

\begin{lstlisting}
function foo(a,b) {
	return a;
}

x = 0;
while(true) {
	x = foo(x);
}
\end{lstlisting}

If we naively interpret this program, we would not be able to reach
a fixed point in analysis. Instead, we would have:

\begin{eqnarray*}
\mathtt{x} & = & \mathtt{foo}\left(Num\right)\\
\mathtt{x} & = & \mathtt{foo}\left(\mathtt{foo}\left(Num\right)\right)\\
\mathtt{x} & = & \mathtt{foo}\left(\mathtt{foo}\left(\mathtt{foo}\left(Num\right)\right)\right)\\
 & \vdots
\end{eqnarray*}

A solution to this problem is to have a curried function anchored
to a particular language construct. In this case, we can use an eid
of a curried expression as a point of reference (or '0' if not curried).

\paragraph{Domain}

\begin{eqnarray*}
FunPointer & = & Sid\times c\times\left(Eid\cup\left\{ 0\right\} \right)\\
Curried & = & Sid\times c\times Eid\rightarrow\left[AVal\right]\ \mbox{where}\ c\mbox{ is a number of parameters given}\\
AState & = & AEnv\times Curried\times AReturn
\end{eqnarray*}

$Curried$ takes a function id, number of argument curried and the
eid of a calling site, and gives the list of curried values. The initial
program state is $\left\langle \emptyset,\emptyset,\mathsf{Void}\right\rangle $.

\paragraph{Functions}

Definitions are given in figure \ref{absfun-curry}

\begin{figure*}[!t]
\noindent\fbox{\begin{minipage}[t]{1\textwidth - 2\fboxsep - 2\fboxrule}%
\begin{align*}
\mathtt{apply}\left(\left\langle n,c,e\right\rangle ,p,e'\right) & =\begin{cases}
\Theta_{{\scriptscriptstyle Curried}}\ \lambda\nu.\underset{C\in\nu\left(n,c,e\right)}{\bigcup}\mathtt{call}\left(n,C\smallfrown p\right)\ \mbox{if }c+\left\Vert p\right\Vert =\mathtt{{arity}\left(n\right)}\\
\Psi_{{\scriptscriptstyle Curried}}\ \lambda\nu.\underset{C\in\nu\left(n,c,e\right)}{\bigcup}\left\langle \nu\left[\left\langle n,c+\left\Vert p\right\Vert ,e'\right\rangle \mapsto\left(C\smallfrown p\right)\right],\left\langle n,c+\left\Vert p\right\Vert ,e'\right\rangle \right\rangle \ {\scriptstyle \mbox{otherwise}}
\end{cases}\\
\mathtt{fundecl}\left(id,n\right) & =\bar{\Upsilon}_{AEnv}\ \lambda\sigma.\sigma\left[id\mapsto\left\langle n,0,0\right\rangle \right]
\end{align*}
\end{minipage}}

\caption{Abstract functions for curried functions}
\label{absfun-curry}
\end{figure*}

\begin{example}[Abstract interpretation of a currying loop]
 The program shown in example \ref{Currying-loop} can be analysed
as follows:
\end{example}

\begin{itemize}
\item At line 1, $\mathtt{fundecl}\left(\mathtt{foo},1\right)$ gives $\left[\mathtt{foo}\mapsto\left\langle 1,0,0\right\rangle \right]$,
assuming $1$ is a unique id given to the function. 
\item At line 7, assuming the partial application expression has a unique
id of 7, $\mathtt{apply}\left(\left\langle 1,0,0\right\rangle ,\left[\mathsf{Num}\right],7\right)$
gives $\left[x\mapsto\left\langle 1,1,7\right\rangle \right]$ for
environment and $\left[\left\langle 1,1,7\right\rangle \mapsto\left[\mathsf{Num}\right]\right]$
for curried in the first iteration. Subsequently, we have $\mathtt{apply}\left(\left\langle 1,0,0\right\rangle ,\left\langle 1,1,7\right\rangle ,7\right)$
gives $\left[x\mapsto\left\langle 1,1,7\right\rangle \right]$ for
environment and $\left[\left\langle 1,1,7\right\rangle \mapsto\left[\left\langle 1,1,7\right\rangle \right]\right]$
for the list of curried value. Here we reach a fixed point.
\item All possible final states calculated by the interpretation are:

\begin{tabular}{|c|c|}
\hline 
Environment & Curried\tabularnewline
\hline 
\hline 
$x\mapsto\mathsf{Num}$ & None\tabularnewline
\hline 
$x\mapsto\left\langle 1,1,7\right\rangle $ & $\left\langle 1,1,7\right\rangle \mapsto\left[\mathsf{Num}\right]$\tabularnewline
\hline 
$x\mapsto\left\langle 1,1,7\right\rangle $ & $\left\langle 1,1,7\right\rangle \mapsto\left[\left\langle 1,1,7\right\rangle \right]$\tabularnewline
\hline 
\end{tabular}
\end{itemize}

\subsection{\label{subsec:Object-Oriented}Object oriented features}

We now extend our language to support objects.

\begin{grammar}

<Lexp> ::= <Exp> `.' ID

<Exp> ::= `global' | `this' | `new' <Lexp> `(' [<Exp> [,<Exp>]*]? `)'

\end{grammar}

$\mathtt{global}$ is a reference to a global object, and the reference
to global object is program invariant (akin to `window' in Client-side
JavaScript or `global' in Node.js). $\mathtt{this}$ is the usual
reference to the receiver object of a method call. 

As this extension introduces new syntactic structure, extra definitions
for semantic functions are given in figure \ref{semmon-oo}. Note,
however, that \emph{existing semantic equations are not impacted by
this}.

\begin{figure*}[!t]
\noindent\fbox{\begin{minipage}[t]{1\textwidth - 2\fboxsep - 2\fboxrule}%
\begin{align*}
 & \mathcal{S}\dd{E_{1}.id=E_{2}} & = & \mathcal{E}\ \dd{E_{1}}\bind\lambda r.\mathcal{E}\ \dd{E_{2}}\bind\lambda v.I_{S}\ \mathtt{set}\left(r,id,v\right)\\
 & \mathcal{E}\dd{\mathtt{global}} & = & I_{V}\ \mathtt{getglobal}\\
 & \mathcal{E}\dd{\mathtt{this}} & = & I_{V}\ \mathtt{getthis}\\
 & \mathcal{E}\de{L\left(\vec{E}\right)} & = & \left(\begin{aligned}\mathcal{L}\ \dd L & \bind\lambda n.\\
\left(\mathtt{evalParams}\ \vec{E}\ \emptyset\right) & \bind\lambda p.\\
I_{V}\ \mathtt{getthis} & \bind\lambda t.\\
\mathtt{apply}\left(n,p,t,e\right)
\end{aligned}
\right)\\
 & \mathcal{E}\de{\mathtt{new}\ L\left(\vec{E}\right)} & = & \left(\begin{aligned}\mathcal{L}\ \dd{L} & \bind\lambda n.\\
\mathtt{evalParams}\ \vec{E}\ \emptyset & \bind\lambda p.\\
\mathtt{newobj}\left(e\right) & \bind\lambda m.\\
\mathtt{apply}\left(n,p,m,e\right) & \bind\lambda\_.\\
I_{A}\ m
\end{aligned}
\right)\\
 & \mathcal{E}\de{E_{1}.id\left(\vec{E}\right)} & = & \left(\begin{aligned}\mathcal{E}\ \dd{E_{1}} & \bind\lambda t.\\
I_{V}\ \mathtt{get}(t,id) & \bind\lambda n.\\
\mathtt{evalParams}\ \vec{E}\ \emptyset & \bind\lambda p.\\
\mathtt{apply}\left(n,p,t,e\right)
\end{aligned}
\right)\\
 & \mathcal{L}\dd{E_{1}.id} & = & \mathcal{E}\ \dd{E_{1}}\bind\lambda v.I_{V}\ \mathtt{get}\left(v,id\right)
\end{align*}
\end{minipage}}

\caption{Semantic equations for the object-oriented extension of the language}
\label{semmon-oo}
\end{figure*}

We redefine the $\mathtt{call}$ function to include receiver object
reference, and to account for side-effects that functions can have
on object memories.

\selectlanguage{english}%
\begin{eqnarray*}
\mathtt{call} & : & Sid\rightarrow\left[Value\right]\rightarrow Value\rightarrow M\ Value\\
\mathtt{call}\left(n,p,t\right) & = & \lambda f\lambda\rho,\gamma.\Theta\ \left\langle \left\{ \mathtt{leave}\ \rho\ \rho'\mid\rho'\in S\right\} ,\gamma'\right\rangle \ \mbox{where}\left\langle S,\gamma'\right\rangle =f\ n\ \left(\mathtt{enter}\ \rho\ n\ p\ t\ \mathtt{param}\left(f,n\right)\right)\ \gamma
\end{eqnarray*}

\selectlanguage{british}%
\begin{eqnarray*}
\mathtt{enter} & : & State\rightarrow Sid\rightarrow\left[Value\right]\rightarrow Value\rightarrow\left[Id\right]\rightarrow State\\
\mathtt{leavel} & : & State\rightarrow State\rightarrow State
\end{eqnarray*}

Note that by modelling function calls as an effect on a state, we
can readily model such concepts as mixins. In the $\mbox{JS}_{0}$
language \cite{Anderson:ECOOP2005,Anderson:ENTCS2005}, mixin creating
behaviour is modelled in a functional language. A similar approach
has been taken in the discussion of the $\lambda_{S}$ language by
Guha et al \cite{Guha:ESOP2011}. However, with languages like JavaScript,
it is in the nature of such a language that functions are side-effect
causing, and much of the type operations are being done by the side-effects.
Hence our deviation from a purely functional approach to devise a
model language that resembles a real-life language.

\subsubsection{Concrete interpretation}

We extend the domain to include object memory (where object references
are mapped to a symbol mapping), and a reference to $\mathtt{this}$
object.

\paragraph{Domain}

\begin{eqnarray*}
Value & = & \mathbb{Z}\cup\left\{ \mathsf{true},\mathsf{false}\right\} \cup Object\cup FunPointer\\
Object & = & \mathbb{N}\\
ObjMem & = & n\rightarrow Env\\
This & = & Object\\
CState & = & Env\times ObjMem\times This\times Return
\end{eqnarray*}

The initial program state is $\left\langle \emptyset,\left\{ 0\mapsto\emptyset\right\} ,0,\mathsf{Void}\right\rangle $.
$0$ is a unique id referring to the global object.

\paragraph{Functions}

We redefine $\mathtt{enter}$ and $\mathtt{leave}$ functions to allow
passing of a receiver object.

\begin{eqnarray*}
\mathtt{enter} & = & \lambda\left\langle \_,\Omega,\_,\_\right\rangle ,n,P,T'.\left\langle \left[\mathtt{param}\left(n\right)_{k}\mapsto P_{k}\right],\Omega,T',\mathsf{Void}\right\rangle \\
\mathtt{leave} & = & \lambda\left\langle V,\_,T,\_\right\rangle ,\left\langle \_,\Omega',\_,r\right\rangle .\left\langle \left\langle V,\Omega',T,\mathsf{Void}\right\rangle ,r\right\rangle 
\end{eqnarray*}

The other functions are given in figure \ref{concfunc-oo}.

\begin{figure*}[!t]
\noindent\fbox{\begin{minipage}[t]{1\textwidth - 2\fboxsep - 2\fboxrule}%
\begin{eqnarray*}
\mathtt{get}\left(n,id\right) & = & \bar{\Theta}_{ObjMem}\ \left(\lambda\Omega.\Omega\left(n\right)\left(id\right)\right)\\
\mathtt{set}\left(n,id,v\right) & = & \bar{\Upsilon}_{ObjMem}\ \left(\lambda\Omega.\Omega\left[n\mapsto\Omega\left(n\right)\left[id\mapsto v\right]\right]\right)\\
\mathtt{getglobal} & = & \bar{\Theta}_{ObjMem}\ \left(\lambda\Omega.\Omega\left(0\right)\right)\\
\mathtt{getthis} & = & \bar{\Theta}_{This}\ \left(\lambda T.T\right)\\
\mathtt{apply}\left(\left\langle n,C\right\rangle ,p,t,\_\right) & = & \begin{cases}
\mathtt{call}\left(n,C\smallfrown p,t\right) & \mbox{if}\ \left\Vert C\smallfrown p\right\Vert =\mathtt{arity}\left(n\right)\\
\Gamma\ \left(\lambda\rho.\left\langle \rho,\left\langle n,C\smallfrown p\right\rangle \right\rangle \right) & \mbox{otherwise}
\end{cases}\\
\mathtt{newobj}\left(\_\right) & = & \bar{\Psi}_{ObjMem}\ \left(\lambda\Omega.\left\langle \Omega\left[n\mapsto\emptyset\right],n\right\rangle \right)\ \mbox{where}\ n=\left\Vert \Omega\right\Vert 
\end{eqnarray*}
\end{minipage}}

\caption{Concrete functions for the object-oriented extension}
\label{concfunc-oo}
\end{figure*}

\begin{example}[\label{Concrete-analysis-of}Concrete analysis of objects]
 Consider the following program.

\begin{lstlisting}
function Fruit(v) {
	this.value = v;
}

function juicible(fruit, juice) {
	function juiceMe(j,x) {
		return this.value + j + x;
	}
	fruit.juice = juiceMe(juice);
}

apple = new Fruit(15);
juicible(apple, 20);

output apple.juice(10); # 15 + 20 + 10
\end{lstlisting}
\end{example}

\begin{itemize}
\item Assume that $\mathtt{Fruit}$ has an id of 1, $\mathtt{juicible}$
has 2, and $\mathtt{juiceMe}$ has 3.
\item Right before line 12, we have $\left[\mathtt{Fruit}\mapsto\left\langle 1,\emptyset\right\rangle ,\mathtt{juicible}\mapsto\left\langle 2,\emptyset\right\rangle \right]$
. At line 12, the $\mathtt{new}$ expression creates an object in
object memory through $\mathtt{newobj}\left(\_\right)$, then passes
it on as a receiver of a method call to $\mathtt{Fruit}$. Inside
the \texttt{Fruit} function, the new object gets the member \texttt{value}.
After the line, we have $Env=\left[\mathtt{Fruit}\mapsto\left\langle 1,\emptyset\right\rangle ,\mathtt{juicible}\mapsto\left\langle 2,\emptyset\right\rangle ,\mathtt{apple}\mapsto\mathsf{Obj}\ 1\right]$
and $ObjMem=\left[1\mapsto\left[\mathtt{value}\mapsto15\right]\right]$.
\item At line 13, $\mathtt{apply}\left(\left\langle 2,\emptyset\right\rangle ,\left[\mathsf{Obj}\ 1,20\right],global,\_\right)$
gives $\mathtt{call}\left(2,\left[\mathsf{Obj}\ 1,20\right],global\right)$.
Inside the \texttt{juicible }function, object memory is manipulated
in line 9 to be $\left[1\mapsto\left[\mathtt{value}\mapsto15,\mathtt{juice}\mapsto\left\langle 3,\left[20\right]\right\rangle \right]\right]$.
\item At line 15, by the method call semantics, $\mathtt{apply}\left(\left\langle 3,\left[20\right]\right\rangle ,\left[10\right],\mathsf{Obj}\ 1,\_\right)$
is invoked, which gives 45 as a final result.
\end{itemize}

\subsubsection{Abstract interpretation}

Similarly, we extend the abstract interpretation. Note that, whereas
in a concrete interpretation we generate a unique id for each of the
objects instantiated at run-time, in abstract interpretation we use
a reference to the allocation site of an object as a reference to
a particular object, hence we contain object memory in a finite domain.

\paragraph{Domain}

\begin{eqnarray*}
AObj & = & \left\{ 0\right\} \cup Eid\\
AVal & = & \left\{ \mathsf{Num},\mathsf{Bool}\right\} \cup AObj\cup FunPointer\\
AObjMem & = & AObj\rightarrow AEnv\\
AThis & = & AObj\\
AState & = & AEnv\times AObjMem\times AThis\times Curried\times AReturn
\end{eqnarray*}

The initial program state is $\left\langle \emptyset,\left\{ 0\mapsto\emptyset\right\} ,0,\emptyset,\mathsf{Void}\right\rangle $.
$0$ is a unique id referring to the global object.

\paragraph{Functions}

\begin{eqnarray*}
\mathtt{enter} & = & \lambda\left\langle \_,\alpha,\_,\nu,\_\right\rangle ,n,P,\tau'.\left\langle \left[\mathtt{param}\left(n\right)_{k}\mapsto P_{k}\right],\alpha,\tau',\nu,\mathsf{Void}\right\rangle \\
\mathtt{leave} & = & \lambda\left\langle \sigma,\_,\tau,\_,\_\right\rangle ,\left\langle \_,\alpha',\_,\nu',r\right\rangle .\left\langle \left\langle \sigma,\alpha',\tau,\nu',\mathsf{Void}\right\rangle ,r\right\rangle 
\end{eqnarray*}

Other functions are defined in figure \ref{absfun-oo}.

\begin{figure*}[!t]
\noindent\fbox{\begin{minipage}[t]{1\textwidth - 2\fboxsep - 2\fboxrule}%
\begin{align*}
\mathtt{get}\left(n,id\right) & =\bar{\Theta}_{AObjMem}\ \lambda\alpha.\alpha\left(n\right)\\
\mathtt{set}\left(n,id,v\right) & =\bar{\Upsilon}_{AObjMem}\ \lambda\alpha.\alpha\left(n\right)\left[id\mapsto v\right]\\
\mathtt{obj}\left(j\right) & =\bar{\Theta}_{AObjMem}\ \lambda\alpha.\alpha\left(j\right)\\
\mathtt{getthis} & =\bar{\Theta}_{AThis}\ \lambda\tau.\tau\\
\mathtt{newobj}\left(n\right) & =\bar{\Psi}_{AObjMem}\ \lambda\alpha.\left\langle \alpha\left[n\mapsto\emptyset\right],n\right\rangle \\
\mathtt{apply}\left(\left\langle n,c,e\right\rangle ,p,t,e\right) & =\begin{cases}
\Theta_{{\scriptscriptstyle Curried}}\ \lambda\nu.\underset{C\in\nu\left(n,c,e\right)}{\bigcup}\mathtt{call}\left(n,C\smallfrown p,t\right)\ \mbox{if}\ c+\left\Vert p\right\Vert =\mathtt{arity}\left(n\right)\\
\Psi_{{\scriptscriptstyle Curried}}\ \lambda\nu.\underset{C\in\nu\left(n,c,e\right)}{\bigcup}\left\langle \nu\left[\left\langle n,c+\left\Vert p\right\Vert ,e\right\rangle \mapsto C\smallfrown p\right],\left\langle n,c+\left\Vert p\right\Vert ,e\right\rangle \right\rangle  & \mbox{otherwise}
\end{cases}
\end{align*}
\end{minipage}}

\caption{Abstract functions for the object-oriented extension}
\label{absfun-oo}
\end{figure*}

\begin{example}[Abstract interpretation of objects]
 The program in example \ref{Concrete-analysis-of} results in the
following final state.

\medskip{}

\begin{tabular}{|c|c|c|}
\hline 
Environment & Object Memory & Curried\tabularnewline
\hline 
\hline 
$\begin{aligned}\mathtt{Fruit} & \mapsto & \left\langle 1,0,0\right\rangle \\
\mathtt{juicible} & \mapsto & \left\langle 2,0,0\right\rangle \\
\mathtt{juiceMe} & \mapsto & \left\langle 3,0,0\right\rangle \\
\mathtt{apple} & \mapsto & \mathsf{Obj}\ 1
\end{aligned}
$ & $\mathsf{Obj}\ 1\mapsto\left(\begin{aligned}\mathtt{value} & \mapsto & \mathsf{\mathsf{Num}}\\
\mathtt{juice} & \mapsto & \left\langle 3,1,9\right\rangle 
\end{aligned}
\right)$ & $\left\langle 3,1,9\right\rangle \mapsto\left[\mathsf{\mathsf{Num}}\right]$\tabularnewline
\hline 
\end{tabular}
\end{example}

Notice that as we add object oriented programming to the programming
language, the functions are now capable of causing side-effects. The
following modified example illustrates how the meaning of recursive
function call is calculated when the function causes side-effects.
\begin{example}[Side-effect causing factorial function]
 Consider the following modification of program analysed in example
\ref{abs-int-fact}.

\begin{lstlisting}
function fact(f,n) {
	if(n>1) { global.x=5; return f(f,n-1) * n; } else { return 1; }
}

z=fact(fact,input);
output z;
\end{lstlisting}
\end{example}

\begin{itemize}
\item At line 5, the meaning of the function call is calculated as follows.
For simplicity, we only record $AEnv,AObjMem$ and $AReturn$. The
rest of the components remain unchanged.
\end{itemize}
\begin{tabular}{|c|c|c|}
\hline 
Current approximation & Meaning of function call & Note\tabularnewline
\hline 
\hline 
$\emptyset$ & $\left\{ \left\langle \begin{gathered}\left[\mathtt{fact}\mapsto\mathsf{Function}\ 1\right],\\
\left[0\mapsto\emptyset\right],\\
\mathsf{\mathsf{Num}}
\end{gathered}
\right\rangle \right\} $ & \tabularnewline
\hline 
$\left\{ \left\langle \begin{gathered}\left[\mathtt{fact}\mapsto\mathsf{Function}\ 1\right],\\
\left[0\mapsto\emptyset\right],\\
\mathsf{\mathsf{Num}}
\end{gathered}
\right\rangle \right\} $ & $\left\{ \begin{gathered}\left\langle \begin{gathered}\left[\mathtt{fact}\mapsto\mathsf{Function}\ 1\right],\\
\left[0\mapsto\emptyset\right],\\
\mathsf{\mathsf{Num}}
\end{gathered}
\right\rangle ,\\
\left\langle \begin{gathered}\left[\mathtt{fact}\mapsto\mathsf{Function}\ 1\right],\\
\left[0\mapsto\left[x\mapsto\mathsf{\mathsf{Num}}\right]\right],\\
\mathsf{\mathsf{Num}}
\end{gathered}
\right\rangle 
\end{gathered}
\right\} $ & \tabularnewline
\hline 
$\left\{ \begin{gathered}\left\langle \begin{gathered}\left[\mathtt{fact}\mapsto\mathsf{Function}\ 1\right],\\
\left[0\mapsto\emptyset\right],\\
\mathsf{\mathsf{Num}}
\end{gathered}
\right\rangle ,\\
\left\langle \begin{gathered}\left[\mathtt{fact}\mapsto\mathsf{Function}\ 1\right],\\
\left[0\mapsto\left[x\mapsto\mathsf{\mathsf{Num}}\right]\right],\\
\mathsf{\mathsf{Num}}
\end{gathered}
\right\rangle 
\end{gathered}
\right\} $ & $\left\{ \begin{gathered}\left\langle \begin{gathered}\left[\mathtt{fact}\mapsto\mathsf{Function}\ 1\right],\\
\left[0\mapsto\emptyset\right],\\
\mathsf{\mathsf{Num}}
\end{gathered}
\right\rangle ,\\
\left\langle \begin{gathered}\left[\mathtt{fact}\mapsto\mathsf{Function}\ 1\right],\\
\left[0\mapsto\left[x\mapsto\mathsf{\mathsf{Num}}\right]\right],\\
\mathsf{\mathsf{Num}}
\end{gathered}
\right\rangle 
\end{gathered}
\right\} $ & Fixed point\tabularnewline
\hline 
\end{tabular}

\subsection{\label{subsec:Exception-handling}Exception handling}

As a final extension to \sdtl, we introduce exception throwing and
handling. First, we introduce a familiar syntax for exception handling.

\begin{grammar}

<Stm> ::= `try' `{' <Stm> `}' `catch' `(' id `)' `{' <Stm> `}' | `throw' <Exp>

\end{grammar}

Following functions are the semantic functions for the newly introduced
syntax.

\begin{eqnarray*}
\mathcal{S}\dd{\mathtt{try}\ S_{1}\ \mathtt{catch}\left(id\right)\ S_{2}} & = & \mathcal{S}\ \dd{S_{1}}\bind_{noesc}\mathtt{catch}\left(id,\mathcal{S}\ \dd{S_{2}}\right)\\
\mathcal{S}\dd{\mathtt{throw}\ E} & = & \mathcal{E}\ \dd{E}\bind\lambda v.\mathtt{throw}\left(v\right)
\end{eqnarray*}

\begin{eqnarray*}
\mathtt{throw} & : & Value\rightarrow M\ Value\\
\mathtt{catch} & : & Id\rightarrow M\ Value\rightarrow M\ Value
\end{eqnarray*}

Note that we cannot use $\bind$ to combine the try code body and
catch, since the $\bind$ operator has $\mathtt{esc}$ built into
its definition, which will terminate the program flow when the exception
is raised. In order to allow exception-raising transformations to
flow through, we introduce a non-escaping bind operator, $\bind_{noesc}$.
$\mathtt{catch}$ primitive operator takes an exception variable name
and the meaning of a catch code block, and produces the meaning of
a language construct in which an exception is caught and handled.
\begin{defn}[Non-escaping bind operator]
 We define a non-escaping bind operator, $\bind_{noesc}$

\begin{eqnarray*}
\left(\bind_{noesc}\right) & : & M\ a\rightarrow\left(a\cup\left\{ \mathsf{Null}\right\} \rightarrow M\ a\right)\rightarrow M\ a\\
T\ \bind_{noesc}\ U & = & \lambda f\lambda s.\:\mathbf{let}\ S=T\ f\ s\ \mathbf{in}\underset{\left\langle s',a\right\rangle \in S}{\bigcup}\left(U\ a\ f\ s'\right)
\end{eqnarray*}

\pagebreak{}
\end{defn}

\subsubsection{Concrete interpretation}

As the control flow of the execution is abstracted out and parametrised
as a function $\mathtt{esc}$, it is not very surprising that introducing
exceptions to a program does not necessitate a heavy modification
to the interpretation of the language.

\paragraph{Domain}

\begin{eqnarray*}
Ex & = & Value\cup\left\{ \mathsf{Void}\right\} \\
CState & = & Env\times ObjMem\times This\times Return\times Ex
\end{eqnarray*}

The initial program state is $\left\langle \emptyset,\left\{ 0\mapsto\emptyset\right\} ,0,\mathsf{Void},\mathsf{Void}\right\rangle $.

\paragraph{Functions}

\begin{eqnarray*}
\mathtt{enter} & = & \lambda\left\langle \_,\Omega,\_,\_,\_\right\rangle ,\left\langle P,T'\right\rangle ,n.\left\langle \left[\mathtt{param}\left(n\right)_{k}\mapsto P_{k}\right],\Omega,T',\mathsf{Void},\mathsf{Void}\right\rangle \\
\mathtt{leave} & = & \lambda\left\langle V,\_,T,\_,\_\right\rangle ,\left\langle \_,\Omega',\_,r,e\right\rangle .\left\langle \left\langle V,\Omega',T,\mathsf{Void},e\right\rangle ,r\right\rangle 
\end{eqnarray*}

\begin{eqnarray*}
\mathtt{esc} & = & \Theta_{Return,Ex}\ \lambda R,E.\left(R\neq\mathsf{Void}\right)\wedge\left(E\neq\mathsf{Void}\right)\\
\mathtt{throw}\left(v\right) & = & \bar{\Upsilon}_{Ex}\ \lambda E.v\\
\mathtt{catch}\left(id,s\right) & = & \Theta_{Ex}\ \lambda E.\begin{cases}
I_{A}\mathsf{Unit} & \mbox{if}\ E=\mathsf{Void}\\
\lambda f,r.s\ f\ \mathtt{exs}\left(\rho,id\right) & \mbox{otherwise}
\end{cases}\\
\mathtt{exs}\left(id\right) & = & \Upsilon_{Env,Ex}\ \lambda V,E.\left\langle V\left[id\mapsto E\right],\mathsf{Void}\right\rangle 
\end{eqnarray*}

$\mathtt{exs}$ helps constructing an entering state to a catch block
if an exception was caught. As we do not support lexical scoping,
an exception variable enters a variable mapping as we enter the block.
\begin{example}[\label{Concrete-interpretation-of}Concrete interpretation of exception
handling]
 The following program illustrates exception handling in \sdtl.

\begin{lstlisting}
x = input;
try {
	if(x<0) {
		throw 0;
	}
	output x;
	j = 3;
} catch(e) {
	output x;
	output e;
}
\end{lstlisting}

\begin{itemize}
\item At line 3, we have $\left[\mathtt{x}\mapsto a\right]$ where $a$
is a user input. Suppose $a$ is less than 0, then \texttt{if} leads
to $\mathtt{throw}\left(0\right)$ which gives $Ex:0$.
\item The resulting state is deemed to be an escaping state by $\mathtt{esc}$.
Hence, the code block of the try clause gives the resulting program
state of $\left\langle \left[x\mapsto a\right],\left\{ 0\mapsto\emptyset\right\} ,0,\mathsf{Void},0\right\rangle $
\item With non-escaping binding to a $\mathtt{catch}$ primitive function
afterwards, \texttt{catch} yields the state with environment of $\left[x\mapsto a,e\mapsto0\right]$
then it evaluates the catch clause, which outputs $a$ and $0$.
\end{itemize}
\end{example}

It is important that this exception handling does not interfere with
other control flow. The following example shows how it interacts with
return state.
\begin{example}[Function call, return and exception handling control flow]
 Consider the following example.

\begin{lstlisting}
function tryorerror(func, error, a) {
	try {
		return func(a);
	} catch(e) {
		return error;
	}
}

function positive(a) {
	if(a>0) { return a; }
	throw 0;
}

gracefulpositive = tryorerror(positive, -1);

function doandprint(func,a) {
	output func(a);
}

try {
	doandprint(positive,50);
	doandprint(gracefulpositive,-50);
	doandprint(positive,-50);
} catch (e) {
	output e;
}
\end{lstlisting}

\begin{itemize}
\item At line 3, when $\mathtt{tryorerror}$ is called from line 21, it
sets a return state with \textsf{Void} exception state. As $Ex$ is
set to \textsf{Void}, the $\mathtt{catch}$ primitive operation does
not interfere, and the control flow proceeds to return the value of
50.
\item When called from line 22, the call to $\mathtt{positive}$ function
sets an exception state. Observe that, during the evaluation of the
return statement, if the exception is set while evaluating the returning
expression, the control flow does not proceed to set a return state.
the $\mathtt{catch}$ primitive operation notices exception state
being set, and proceeds to evaluate the exception handling block,
which returns -1. The meaning of the whole try-catch block, then,
is evaluated to be setting a return state of -1.
\item Finally, when the exception throwing function $\mathtt{positive}$
is directly called, at line 23, the exception state is set to be -50,
and $\mathtt{catch}$ primitive operation proceeds to line 25, printing
0 to the output.
\end{itemize}
\end{example}

\subsubsection{Abstract interpretation}

Abstract definitions are largely similar to that of concrete ones.
Notice that we carry on modified exception state returned from a function
call, thereby allowing exception propagation.

\paragraph{Domain}

\begin{eqnarray*}
AEx & = & AVal\cup\left\{ \mathsf{Void}\right\} \\
AState & = & AEnv\times AObjMem\times AThisCurried\times AReturn\times AEx
\end{eqnarray*}

The initial program state is $\left\langle \emptyset,\left\{ 0\mapsto\emptyset\right\} ,0,\emptyset,\mathsf{Void},\mathsf{Void}\right\rangle $.

\paragraph{Functions}

\begin{eqnarray*}
\mathtt{enter} & = & \lambda\left\langle \_,\alpha,\_,\nu,\_,\_\right\rangle ,\left\langle P,\tau'\right\rangle ,n.\left\langle \left[\mathtt{param}\left(n\right)_{k}\mapsto P_{k}\right],\alpha,\tau',\nu,\mathsf{Void},\mathsf{Void}\right\rangle \\
\mathtt{leave} & = & \lambda\left\langle \sigma,\_,\tau,\_,\_,\_\right\rangle ,\left\langle \_,\alpha',\_,\nu',r,e\right\rangle .\left\langle \left\langle \sigma,\alpha',\tau,\nu',\mathsf{Void},e\right\rangle ,r\right\rangle 
\end{eqnarray*}

\begin{eqnarray*}
\mathtt{esc} & = & \Theta_{AReturn,AEx}\ \lambda R,E.\left(R\neq\mathsf{Void}\right)\wedge\left(E\neq\mathsf{Void}\right)\\
\mathtt{throw}\left(v\right) & = & \bar{\Upsilon}_{AEx}\ \lambda E.v\\
\mathtt{catch}\left(id,s\right) & = & \Theta_{AEx}\ \lambda E.\begin{cases}
I_{A}\mathsf{Unit} & \mbox{if}\ E=\mathsf{Void}\\
\lambda f\lambda\eta.s\ f\ \mathtt{exs}\left(\eta,id\right) & \mbox{otherwise}
\end{cases}\\
\mathtt{exs}\left(id\right) & = & \Upsilon_{AEnv,AEx}\ \lambda\sigma,E.\left\langle \sigma\left[id\mapsto E\right],\mathsf{Void}\right\rangle 
\end{eqnarray*}

\begin{example}[Abstract interpretation of exception handling]
 The program in example \ref{Concrete-interpretation-of} can be
abstractly interpreted as follows.
\end{example}

\begin{itemize}
\item After lines 3-5, we have two possible states $\left\langle \left[\mathtt{x}\mapsto\mathsf{Num}\right],\left\{ 0\mapsto\emptyset\right\} ,0,\emptyset,\mathsf{Void},\mathsf{Void}\right\rangle $
and\\
 $\left\langle \left[\mathtt{x}\mapsto\mathsf{Num}\right],\left\{ 0\mapsto\emptyset\right\} ,0,\emptyset,\mathsf{Void},\mathsf{Num}\right\rangle $
\item By the bind operation, the latter state becomes the final state of
the try clause. The former state progresses further, and updates its
environment to $\left[\mathtt{x}\mapsto\mathsf{Num},\mathtt{j}\mapsto\mathsf{Num}\right]$
\item At line 8, the environment of exception throwing state is updated
to be $\left[\mathtt{x}\mapsto\mathsf{Num},\mathtt{e}\mapsto\mathsf{Num}\right]$,
then the catch clause is evaluated, which does not cause any state
update.
\item The final states are $\left\langle \left[\mathtt{x}\mapsto\mathsf{Num},\mathtt{j}\mapsto\mathsf{Num}\right],\left\{ 0\mapsto\emptyset\right\} ,0,\emptyset,\mathsf{Void},\mathsf{Void}\right\rangle $
and\\
 $\left\langle \left[\mathtt{x}\mapsto\mathsf{Num},\mathtt{e}\mapsto\mathsf{Num}\right],\left\{ 0\mapsto\emptyset\right\} ,0,\emptyset,\mathsf{Void},\mathsf{Void}\right\rangle $
\end{itemize}

\section{\label{sec:Well-definedness}Well-definedness}

All the functions appearing in both concrete and abstract versions
of interpretations are ordered point-wise. Recall that the bind operation
preserves the monotonicity when binding two monotonic functions. Observe
that all of the primitive operations that yield a state transformer,
and the value-returning operations lifted to yield one, are monotonic.
This follows from the fact that the domain of such transformers are
ordered by identity. Then, it follows that $\mathcal{S}$, $\mathcal{E}$
and $\mathcal{L}$ are monotonic for any given $f\in F$.

Now, observe the monotonic relationship between $F$ and monadic functions.
\begin{defn}[Point-wise ordering of function approximation]
Given $f,f'\in F$, then $f\sqsubseteq f'$ iff $\forall s\in Sid,\forall\sigma\in State,\left(f\ s\ \sigma\right)\subseteq\left(f'\ s\ \sigma\right)$
\end{defn}

\begin{thm}
$\forall f,f'\in F,f\sqsubseteq f'\implies\forall s\in\mathcal{S}\forall t\in Stm,\left(s\ t\ f\right)\sqsubseteq\left(s\ t\ f'\right)$
\end{thm}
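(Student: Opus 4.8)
The plan is to prove the statement by structural induction on the syntactic object $t$, carried out simultaneously (mutual induction) over the three semantic functions $\mathcal{S}$, $\mathcal{E}$ and $\mathcal{L}$, reading the quantifier ``$\forall s\in\mathcal{S}$'' as ``$s$ ranges over $\{\mathcal{S},\mathcal{E},\mathcal{L}\}$''. It is convenient to say that an element $T\in M\,a$ is \emph{$F$-monotone} if $f\sqsubseteq f'\implies T\ f\sqsubseteq T\ f'$ in the point-wise ordering of state transformations; the claim is then precisely that $s\ t$ is $F$-monotone for every semantic function $s$ and every syntactic $t$. The guiding observation is that in almost every semantic equation the argument $f$ is merely \emph{threaded} through the sub-computations and never inspected, so $F$-monotonicity there is inherited from the monotonicity of the combinators; the only equations in which $f$ is genuinely dereferenced are the ones for function application.

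First I would establish the combinator lemmas that drive the induction. The lifting operators $I_A\ v$, $I_V\ t$ and $I_S\ T$ ignore their $f$ argument entirely (see their definitions), so every $M$-value built purely from these together with the value-level primitive operations is \emph{constant} in $f$ and hence trivially $F$-monotone; this disposes of the base cases $\mathcal{S}\dd{\epsilon}$, $\mathcal{E}\dd{con}$, $\mathcal{E}\dd{\mathtt{input}}$ and $\mathcal{L}\dd{Id}$. Next I would prove that $\bind$ preserves $F$-monotonicity: if $T\in M\,a$ is $F$-monotone and $U\ a$ is $F$-monotone for every $a$, then $T\bind U$ is $F$-monotone. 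This is the already-proven Preservation of Monotonicity theorem strengthened to let the continuation vary with $f$, and it follows by the same element-chase: an element of $(T\bind U)\ f\ s$ comes from some $\langle s',a\rangle\in T\ f\ s$; the $\mathtt{esc}(s')$ branch contributes $\langle s',\mathsf{Null}\rangle$, which survives because $\langle s',a\rangle\in T\ f'\ s$ by $F$-monotonicity of $T$, and the non-escaping branch contributes elements of $U\ a\ f\ s'\subseteq U\ a\ f'\ s'$ by $F$-monotonicity of $U\ a$. The identical argument works for $\bind_{noesc}$. I would also record that $\mathtt{cond}$ preserves $F$-monotonicity of its two transformer arguments --- in the concrete reading because it selects one of them, in the abstract reading because it unions them. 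With these lemmas every inductive case whose equation is a combination of sub-semantic functions via $\bind$, $\bind_{noesc}$, the $I$-liftings and $\mathtt{cond}$ (sequencing, assignment, output, return, conditionals, throw/catch, member get/set, and the expression/statement wrappers) follows immediately from the induction hypotheses.

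The crucial case, and the only place where the hypothesis $f\sqsubseteq f'$ is actually consumed, is function application, i.e.\ $\mathtt{call}$ (and through it $\mathtt{apply}$ and the call/method/new expressions). Here the body computes $S=\mathtt{run}(f,n)(\mathtt{enter}\ \rho\ n\ p\ \mathtt{param}(f,n))$ (equivalently dereferences $f\ n$) and then maps $\mathtt{leave}\ \rho$ over $S$. I would argue that the syntactic auxiliaries $\mathtt{param}$, $\mathtt{stm}$ and $\mathtt{arity}$ return the same data for $f$ and $f'$ --- the ordering on $F$ compares only the state-transformer components of the pairs, the $Stm$ components being fixed by the program text --- so $\mathtt{enter}$ produces the same entering state for both. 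Then $f\sqsubseteq f'$ gives, by definition of the ordering on $F$, $\mathtt{run}(f,n)\,\sigma\subseteq\mathtt{run}(f',n)\,\sigma$, and since $\mathtt{leave}\ \rho$ is applied point-wise the resulting set for $f$ is contained in that for $f'$. Hence $\mathtt{call}(n,p)$ is $F$-monotone, and composing with the (already $F$-monotone) evaluation of the callee and arguments via $\bind$ closes the case.

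The step I expect to be the main obstacle is the $\mathtt{while}$ equation, whose meaning is an \emph{inner} least fixed point $Y\Phi$ in the $M$-lattice. To show $Y\Phi$ is $F$-monotone I would use Kleene's characterisation $Y\Phi=\bigsqcup_n\Phi^n(\bot_M)$, where $\bot_M=\lambda f\lambda s.\emptyset$ is vacuously $F$-monotone; the functional $\Phi(x)=\lambda f\lambda r.\left(\mathcal{E}\dd{E}\bind\lambda v.\mathtt{cond}(v,\mathcal{S}\dd{S_1}\bind\lambda\_.x,I_A\mathsf{Unit})\right)$ maps $F$-monotone arguments to $F$-monotone results by the combinator lemmas and the induction hypotheses for $E$ and $S_1$, so each approximant $\Phi^n(\bot_M)$ is $F$-monotone by induction on $n$. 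Finally I would observe that the $M$-lattice least upper bound is computed point-wise as a union, and that a point-wise union of $F$-monotone transformers is again $F$-monotone (because $\bigcup_n h_n\ f\ s\subseteq\bigcup_n h_n\ f'\ s$), so the limit $Y\Phi$ inherits $F$-monotonicity. Checking that this supremum interchange is legitimate --- that the inner fixed point exists for each $f$ (which I may take from the already-asserted fixed-$f$ monotonicity) and that $\Phi$ is continuous so Kleene applies --- is the only genuinely delicate bookkeeping in the proof.
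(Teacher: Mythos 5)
Your proposal is correct and takes essentially the same route as the paper: the paper's one-paragraph proof --- $\mathtt{call}$ is the only place the function approximation is dereferenced, $\mathtt{call}$ is monotone in $f$, and $\bind$ preserves monotonicity, whence the result ``immediately follows'' --- is exactly the compressed form of your structural induction. Your two elaborations (strengthening the bind lemma so that the continuation may also vary with $f$, since the paper's stated Preservation of Monotonicity theorem fixes $f$ and varies only $T$, and the explicit Kleene-iteration treatment of the inner fixed point in the $\mathtt{while}$ case) are faithful fillings-in of details the paper elides rather than a different argument.
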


\begin{proof}
Clearly, the $\mathtt{call}$ auxiliary function is monotonic with
regards to the function approximation. The $\mathtt{call}$ function
is the only function for which the function approximation value is
referenced. Recall the preservation of monotonicity of bind operation.
Then, this theorem immediately follows.
\end{proof}
It follows from this theorem that there exists a least fixed point
of the meaning of function $f\in F$ for a particular program. Hence
the well-definedness of the fixed point of $F$ in a particular analysis,
and that of the semantics both in concrete and abstract interpretations.

\section{\label{sec:Correctness}Correctness}

We define an abstraction relation $\succ$between concrete and abstract
states. 

\subsection{Definition of correct abstraction}

First, we define an abstraction between values and other components
of the states.
\begin{defn}
$\mathsf{Num}\succ\mathbb{Z}$, $\mathsf{\mathsf{Bool}}\succ\mathbb{B}$,
$\mathsf{Void}\succ\mathsf{Void}$

\begin{defn}[Abstraction of objects and symbol maps]
$\alpha,\Omega\vdash n\succ m$ if $\alpha\left(n\right)\succ\Omega\left(m\right)$

$\sigma:AEnv\succ V:Env$ if $\forall id\in dom\left(V\right),\sigma\left(id\right)\succ V\left(id\right)$

\begin{defn}[Function pointers]
$C:Curried\vdash\left\langle n,c,e\right\rangle \succ\left\langle m,P\right\rangle $
if $n=m$, $c=\left\Vert P\right\Vert $ \\
and $\exists A\in C\left(n,c,e\right),A_{k}\succ P_{k}$ for $1\leq k\leq c$
\end{defn}

\end{defn}

\end{defn}

Now we are ready to define an abstraction between concrete and abstract
states.
\begin{defn}[Abstraction relation]
Given

$\rho=\left(\begin{gathered}V:Env,\Omega:ObjMem,T:This,R:Return,E:Ex\end{gathered}
\right)\in CState$

$\eta=\left(\begin{gathered}\sigma:AEnv,\alpha:AObjMem,\tau:AThis,C:Curried,r:AReturn,e:AEx\end{gathered}
\right)\in AState$

$\eta\succ\rho$ if $\left(\sigma\succ V\right)\wedge\left(\forall m\in dom\left(\Omega\right)\exists n\in dom\left(\alpha\right),\alpha\left(n\right)\succ\Omega\left(m\right)\right)\wedge\left(\tau\succ T\wedge r\succ R\wedge e\succ E\right)$

\begin{defn}[Abstraction between powersets]
Given an abstract value $a$, a concrete value $c$,\\
 $p\in\wp\left(AState\times a\right)\mbox{ and }q\in\wp\left(CState\times c\right),p\succ q$
if

$\forall\rho\in q\exists\eta\in p,\left(\eta\succ\rho\right)\wedge\left(\rho,\eta\vdash a\succ c\right)$
\end{defn}

\end{defn}

\subsection{Morphism}

First, we observe that the bind operation preserves morphisms of the
functions
\begin{defn}
A pair of functions$\left\langle F,F'\right\rangle $ shows a morphism
over $\succ$ if the following condition is met

$a\succ b\implies F'\left(a\right)\succ F\left(b\right)$
\end{defn}

\begin{thm}
If $\left\langle F,F'\right\rangle $ and $\left\langle G,G'\right\rangle $
are pairs of functions that show a morphism over $\succ$, then $\left\langle F\bind G,F'\bind G'\right\rangle $
is also such pair.
\end{thm}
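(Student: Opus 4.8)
The plan is to unfold what it means for the pair $\langle F \bind G, F' \bind G' \rangle$ to show a morphism over $\succ$, and then to trace an arbitrary concrete output back through the definition of bind, invoking the two given morphisms in turn. Reading the morphism condition at the level of monadic values, I take $\succ$-related function environments $f'$, $f$ and $\succ$-related input states $\eta_0 \succ \rho_0$, and I must establish the powerset abstraction $(F' \bind G')\ f'\ \eta_0 \succ (F \bind G)\ f\ \rho_0$. By the definition of abstraction between powersets, this reduces to producing, for every concrete pair $\langle \rho'', c'' \rangle \in (F \bind G)\ f\ \rho_0$, an abstract pair $\langle \eta'', a'' \rangle \in (F' \bind G')\ f'\ \eta_0$ with $\eta'' \succ \rho''$ and the matching value relation $a'' \succ c''$.

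First I would expand the concrete bind. Any such $\langle \rho'', c'' \rangle$ originates from some intermediate pair $\langle \rho', c' \rangle \in F\ f\ \rho_0$. Applying the morphism hypothesis for $\langle F, F' \rangle$ gives $F'\ f'\ \eta_0 \succ F\ f\ \rho_0$, so there is a matching abstract intermediate pair $\langle \eta', a' \rangle \in F'\ f'\ \eta_0$ with $\eta' \succ \rho'$ and $a' \succ c'$. This matched pair is what I then feed into the outer union on the abstract side.

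The crucial step is to show that the $\mathtt{esc}$ test selects the same branch on $\rho'$ as on $\eta'$, i.e.\ that $\mathtt{esc}(\eta') = \mathtt{esc}(\rho')$ whenever $\eta' \succ \rho'$. This follows from $\eta' \succ \rho'$ forcing $r \succ R$ and $e \succ E$ on the return and exception components, together with the fact that $\mathsf{Void}$ abstracts only $\mathsf{Void}$: a non-$\mathsf{Void}$ concrete component must be matched by a non-$\mathsf{Void}$ abstract one, and a $\mathsf{Void}$ concrete component forces the abstract one to be $\mathsf{Void}$ too, so the boolean computed by $\mathtt{esc}$ coincides. With this in hand I split into two cases. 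If $\mathtt{esc}(\rho')$ holds, the concrete side contributes $\langle \rho', \mathsf{Null} \rangle$ and, since $\mathtt{esc}(\eta')$ also holds, the abstract side contributes $\langle \eta', \mathsf{Null} \rangle$; here $\eta' \succ \rho'$ and the convention $\mathsf{Null} \succ \mathsf{Null}$ give the required relation. Otherwise $\langle \rho'', c'' \rangle \in G\ c'\ f\ \rho'$, and applying the morphism hypothesis for $\langle G, G' \rangle$ to the related data $a' \succ c'$, $\eta' \succ \rho'$ yields $G'\ a'\ f'\ \eta' \succ G\ c'\ f\ \rho'$, producing the desired $\langle \eta'', a'' \rangle$ inside $(F' \bind G')\ f'\ \eta_0$.

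I expect the escaping-consistency step to be the main obstacle, and the point most easily glossed over: the bind operator is a single definition, but it is instantiated with the concrete $\mathtt{esc}$ on the left and the abstract $\mathtt{esc}$ on the right, so the argument is sound only if these two tests agree on $\succ$-related states. Everything else is bookkeeping --- unfolding the nondeterministic union and threading the two morphism hypotheses through it --- but it rests on the value-level conventions ($\mathsf{Null} \succ \mathsf{Null}$, and $\mathsf{Void}$ abstracting only $\mathsf{Void}$), which I would pin down explicitly before carrying out the case analysis.
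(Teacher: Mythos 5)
Your proposal is correct and takes essentially the same route as the paper's own proof: unfold the bind on the concrete side, use the $\left\langle F,F'\right\rangle$ morphism to obtain a matching abstract intermediate pair, case split on $\mathtt{esc}$, and close each branch either by the pass-through of escaping states or by the $\left\langle G,G'\right\rangle$ morphism. The only difference is that you explicitly justify what the paper merely asserts --- that $\mathtt{esc}$ agrees on $\succ$-related states, via $r\succ R$, $e\succ E$ and the fact that $\mathsf{Void}$ abstracts only $\mathsf{Void}$ --- which tightens the argument, since the non-escaping case silently requires $\neg\mathtt{esc}\left(\eta'\right)$ as well as $\neg\mathtt{esc}\left(\rho'\right)$.
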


\begin{proof}
$\forall\rho\in CState$, let $p=F\left(\rho\right)$,$p'=F\bind G\left(\rho\right)$,
$\eta\in AState\mbox{ such that }\eta\succ\rho,$$q=F'\left(\eta\right)$,$q'=F'\bind G'\left(\eta\right)$.
\\
Then, $\forall p_{1}\in p,$

\begin{enumerate}
\item when $\mathtt{esc}\left(p_{1}\right)$ is true, then the morphism
and $\succ$ relation imply that $\exists q_{1}\in q,\mathtt{esc}\left(q_{1}\right)$.
Given the definition of $\bind$ operation, $p_{1}$ and $q_{1}$are
also elements of $p'$ and $q'$.
\item otherwise, morphism of $\left\langle G,G'\right\rangle $ gives that
$\forall p_{2}\in G\left(p_{1}\right)\exists q_{2}\in G'\left(q_{1}\right)\mbox{ such that }q_{2}\succ p_{2}$.
Such entities also exist in $q'$.
\end{enumerate}
\end{proof}
Finally, we observe that operations resulting from abstract functions
and their corresponding concrete functions form such a morphism over
$\succ$. We omit function-by-function proof, as such proof would
be a mechanical exercise.

\section{\label{sec:Conclusion-and-future}Conclusion and future direction}

As we have noted in the introduction, this work is a snapshot of an
ongoing dialogue between theory and practice, positively informing
each other to gradually move towards a better theorisation (and practical
implementation) of the difficult task of analysing dynamic languages.
We have sought to modularise the theoretical framework so that we
can take an incremental approach. We anticipate that, as a result
of having such a theory, adding new features to the current model
language while maintaining formality and rigour will be considerably
less laborious than to invent a new incarnation of a more feature-complete
model language and produce theory for it.

\bibliographystyle{IEEEtran}
\phantomsection\addcontentsline{toc}{section}{\refname}\bibliography{duck}

\pagebreak{}

\appendix

\section{Haskell Implementation}

\subsection{Syntax parser }

Creates an augmented syntax tree with unique id numbers.

\lstinputlisting[basicstyle={\footnotesize\ttfamily},language=Haskell,numbers=none,xleftmargin=0pt]{Sdtl.y}

\subsection{Semantic functions}

\lstinputlisting[basicstyle={\footnotesize\ttfamily},language=Haskell,numbers=none,xleftmargin=0pt]{SemanticFunctions.hs}

\subsection{Concrete interpretation}

\lstinputlisting[basicstyle={\footnotesize\ttfamily},language=Haskell,numbers=none,xleftmargin=0pt]{ConcreteInterpretation.hs}

\subsection{\label{subsec:Abstract-interpretation}Abstract interpretation}

\lstinputlisting[basicstyle={\footnotesize\ttfamily},language=Haskell,numbers=none,xleftmargin=0pt]{AbstractInterpretation.hs}
\end{document}